\newcommand\restr[2]{\ensuremath{\left.#1\right|_{#2}}}
\DeclareMathOperator{\poly}{poly}
\algnewcommand\And{\textbf{and}}
\algnewcommand\Or{\textbf{or}}
\newcolumntype{Y}{>{\centering\arraybackslash}X}
\tikzset{draw half paths/.style 2 args={%
  decoration={show path construction,
    lineto code={
      \draw [#1] (\tikzinputsegmentfirst) --
         ($(\tikzinputsegmentfirst)!0.2!(\tikzinputsegmentlast)$);
      \draw [#2] ($(\tikzinputsegmentfirst)!0.2!(\tikzinputsegmentlast)$)
        -- (\tikzinputsegmentlast);
    }
  }, decorate
}}
\tikzset{
    my box/.style = {
        , line cap = round
        , line join = round
    }
  }
\newcommand{\highlight}[3]{
  \path [my box, line width = #1, draw = #2, transparency group, opacity=1] #3;
}
\tikzset{wavy/.style={decorate,decoration={snake,amplitude=.4mm,segment length=2mm,post length=0mm,pre length=0mm},line width=.5}}
\tikzset{myDotted/.style={line width=1,dash pattern={on 1pt off 3pt}}}
\tikzset{myDashed/.style={line width=1,dash pattern={on 3pt off 3pt}}}
\tikzset{myDashDotted/.style={line width=1,dash pattern={on 3pt off 3pt on 1pt off 3pt}}}
\renewcommand*{\write@math}[3]{%
            \pgfmathtruncatemacro{\printindex}{#3+1}
            \Vertex[x = #1,y = #2,%
                    L = \cmdGR@cl@prefix\grMathSep{\printindex}]{\cmdGR@cl@prefix#3}}
\def\R{\mathbb R}
\def\Q{\mathbb Q}
\def\Z{\mathbb Z}
\def\N{\mathbb N}
\def\1{\mathbb 1}
\def\mc{\mathcal}
\DeclareMathOperator*{\argmax}{arg\,max}
\DeclareMathOperator*{\argmin}{arg\,min}
\tikzset{v 0/.style={fill=white}, v 1/.style={fill=black!40},
   venn path 1/.style={insert path={
     (90:1/sqrt 3) arc (60:120:1) arc (180:0:1) arc (60:120:1) -- cycle }},
   venn path 2/.style={insert path={
     (90:1/sqrt 3) arc (120:180:1) arc (240:180:1) arc (120:60:1) -- cycle }},
   venn path 3/.style={insert path={
     (90:1/sqrt 3) arc (120:180:1) arc (240:300:1) arc (0:60:1) -- cycle }},
   pics/venn 3/.style args={#1#2#3#4#5#6#7#8}{code={%
     \fill [v #1] (-1.5-.5, -1.3-.5) rectangle (1.5+.5,1.6+.5);
     \fill [v #2, rotate=240, venn path 1];
     \fill [v #3, rotate=120, venn path 1];
     \fill [v #4, venn path 1];
     \fill [v #5, rotate=240, venn path 2];
     \fill [v #6, rotate=120, venn path 2];
     \fill [v #7, venn path 2];
     \fill [v #8, venn path 3];
     \draw [line width = .05mm] (90:1/sqrt 3) circle [radius=1] (210:1/sqrt 3) circle [radius=1]
     (330:1/sqrt 3) circle [radius=1] (-1.5-.5, -1.3-.5) rectangle (1.5+.5,1.6+.5);
     \node at (0,1.65){\Large $E_1$};
     \begin{scope}[rotate=120]
       \node at (0,1.65){\Large $E_2$};
     \end{scope}
     \begin{scope}[rotate=-120]
       \node at (0,1.65){\Large $E_3$};
     \end{scope}
}}}
\tikzset{every venn/.style={x=1em, y=1em, baseline=-.666ex,
    v 1/.style={fill=gray}}}
\def\mytransformation{%
\pgfmathsetmacro{\myX}{\pgf@x}
\pgfmathsetmacro{\myY}{\pgf@y}
\setlength{\pgf@x}{\myX pt}
\setlength{\pgf@y}{\myY pt}
}
\newcommand\mtiny[1]{\mbox{\tiny\ensuremath{#1}}}
\begin{document}
\title{The Simultaneous Assignment Problem}
\author{P\'eter Madarasi}

\institute{P. Madarasi \at
Department of Operations Research, ELTE E\"otv\"os Lor\'and University, and the ELKH-ELTE Egerv\'ary Research Group on Combinatorial Optimization, E\"otv\"os Lor\'and Research Network (ELKH), P\'azm\'any P\'eter s\'et\'any 1/C, 1117 Budapest, Hungary. \email{madarasip@staff.elte.hu}
}

\date{}

\maketitle

\begin{abstract}
  This paper introduces the \emph{simultaneous assignment problem}.
  Let us given a graph with a weight and a capacity function on its edges, and a set of its subgraphs along with a degree upper bound function for each of them.
  We are also given a laminar system on the node set with an upper bound on the degree-sum of the nodes in each member of the system.
  Our goal is to assign each edge a non-negative integer below its capacity such that the total weight is maximized, the degrees in each subgraph are below the degree upper bound associated with the subgraph, and the degree-sum bound is respected in each member of the laminar~system.

  We identify special cases when the problem can be shown to be solvable in polynomial time.
  One of these cases is a common generalization of the \emph{hierarchical $b$-matching problem} and the \emph{laminar matchoid problem}.
  This implies that both problems can be solved efficiently in the weighted, capacitated case even if both lower and upper bounds are present  --- generalizing the previous polynomial-time algorithms.
  The problem is also solvable for trees provided that the laminar system is empty and a natural assumption holds for the subgraphs.

  The general problem, however, is shown to be APX-hard in the unweighted case, which also
  implies that the \emph{weighted distance matching problem} is APX-hard.
  Furthermore, we prove that the approximation guarantee of any polynomial-time algorithm must increase asymptotically linearly in the number of the given subgraphs, unless P=NP.
  We give a generic framework for deriving approximation algorithms, which can be applied to a wide range of problems.
  As an application to our problem, a constant-approximation algorithm is derived when the number of the given subgraphs is a constant.
  The approximation guarantee is the same as the integrality gap of a strengthened LP-relaxation when the number of the subgraphs is small.
  Furthermore, improved approximation algorithms are given when the degree bounds are uniform or the graph is sparse.
\end{abstract}

\newpage

\section{Introduction}\label{sec:simultan:intro}
In the \emph{simultaneous assignment problem}, we are given a graph with a weight and a capacity function on its edges and a set of its subgraphs along with a degree upper bound function for each of them.
We are also given a laminar system on the node set with an upper bound on the degree-sum of the nodes in each member of the system.
Our goal is to assign each edge a non-negative integer below its capacity such that the total weight is maximized, the degrees in each subgraph are below the degree upper bound associated with the subgraph, and the degree-sum bound is respected in each member of the laminar system.

More precisely, given are 1) a loop-free graph $G=(V,E)$, 2) a capacity function $c:E\to\Z_+$, 3) a set~$\mathcal{H}$ of the subgraphs of $G$ along with a function $b_H:V_H\to\Z_+$ for each subgraph $H=(V_H,E_H)\in\mathcal{H}$ and 4) a laminar system $\mc L$ on the nodes of $G$ with a function $g:\mc L\to\Z_+$.
An integer vector $x\in\Z_+^E$ is a \emph{simultaneous assignment} if 1) $x\leq c$, 2) $x$ is a $c$-capacitated $b_H$-matching in $H$ for all $H\in\mc H$ and 3) $\sum_{v\in L}\sum_{e\in\Delta_G(v)}x_e\leq g(L)$ holds for all $L\in\mc L$, where $\Delta_G(v)$ denotes the set of edges incident to node $v$.
In the weighted version of the problem, $\sum_{e\in E}w_ex_e$ is to be maximized for a given weight function $w:E\to\R_+$.
The case $w\equiv 1$ will be referred to as the \emph{unweighted} problem.
We will see that this cumbersome problem includes a number of natural, interesting special cases --- some of which are solvable in polynomial time.

Note that constraint 3) poses an upper bound on the $x$-degree sum in each $L\in\mc L$ --- this means that the $x$-values of the edges induced by $L$ count twice while those of the non-induced incident edges once.
Constraints 3) will be referred to as the \emph{degree-sum constraints}.
The problem is called \emph{uncapacitated} when $c\equiv\infty$.
Without loss of generality, we assume that no subgraph in $\mc H$ contains isolated nodes.

The integer solutions of the following linear program are, by definition, the feasible simultaneous assignments.
\begin{subequations}
  \begin{align}\label{lp:simultan:degLp}
    \tag{LP1}
    \max&\sum_{st\in E}w_{st}x_{st}\\
    \mbox{s.t.}\quad\quad\quad\quad\quad&&\nonumber\\
    x&\in\R_+^{E}&\label{eq:simultan:degLp:nonNeg}\\
    x_{e} &\leq c_{e} &\forall e\in E\label{eq:simultan:degLp:cap}\\
    \sum_{e\in\Delta_H(v)} x_{e} &\leq b_H(v) &\forall H\in\mc H\ \forall v\in V_H\label{eq:simultan:degLp:deg}\\
    \sum_{v\in L}\sum_{e\in\Delta_G(v)}x_e &\leq g(L) &\forall L\in\mc L\label{eq:simultan:degLp:card}
  \end{align}
\end{subequations}
Indeed, all feasible integer solutions to~(\ref{lp:simultan:degLp}) are non-negative by~(\ref{eq:simultan:degLp:nonNeg}) and respect the capacity constraints by~(\ref{eq:simultan:degLp:cap}).
The degree constraints in each subgraph $H\in\mc H$ and the degree-sum constraint for $L\in\mc L$ hold by~(\ref{eq:simultan:degLp:deg}) and~(\ref{eq:simultan:degLp:card}), respectively.

\paragraph{\normalfont\textbf{Motivation}}
Imagine $k$ consecutive events taking place in the same hall and a set of attendees who want to buy tickets.
Given the event and the seat, we know how much a ticket costs.
Each customer provides the list of seats that would suit him/her and also selects which of the $k$ events (possibly more than one) they want to attend.
Our goal is to assign customers to seats such that the total income is maximized and if somebody wanted to attend multiple events, then he/she must be either completely refused or seated to the same place for all the events.
This scenario can be modeled as a simultaneous assignment problem as follows.
Define a bipartite graph $G=(S,T;E)$, where $S$ corresponds to the customers and $T$ to the seats.
For $s\in S$ and $t\in T$, add $st$ to $E$ if customer $s$ likes seat $t$, and let $w(st)$ be the income if $s$ is seated to $t$.
Let $S_i$ denote the set of customers who want to attend event $i$, and let $\mc H = \{H_1,\dots,H_k\}$, where $H_i$ is the subgraph of $G$ induced by node set $S_i\cup T$.
Let $c\equiv 1$, $\mc L=\emptyset$ and $b_{H_i}\equiv 1$ for $i\in\{1,\dots,k\}$.
By construction, there is a one-to-one correspondence between feasible customer-seat assignments and feasible simultaneous assignments.

It is quite natural to require these constraints only for intervals of events, that is, if a customer skips some of the events, then he/she may be seated to a new place when he/she arrives back.
Observe that a participant $s$ leaving the hall at some point can be replaced with new dummy participants each of whom attends exactly one of the intervals of the events selected by $s$.
That is, one can assume that each customer participates in an interval of events.
This special case will be investigated in Section~\ref{sec:simultan:trees}.

The simultaneous assignment problem can be interpreted as an investment problem, where participants may also invest into each other.
Let $V$ denote the set of instruments (or participants), and let $E$ denote the possible investments (parallel edges are allowed).
Each investment $uv\in E$ has an associated price $w(uv)$ and a bound on the number of such investment $c(uv)$.
Each $v\in V$ can pose a bound $b_F$ on $\sum_{e\in F}x_e$ for any subset $F$ of the edges incident to $v$, which defines the subgraphs in $\mc H$ --- this is where the sense of the investment can be taken into account.
In addition, a laminar system $\mc L$ on the instruments can be also given along with a bound on $\sum_{v\in L}\sum_{e\in\Delta_G(v)}x_e$ for each $L\in\mc L$, which is the total money flow involving some of the instruments in $L$.
Our goal is to find an assignment of maximum total weight respecting all bounds above.
Note that one may unite some of the subgraphs in $\mc H$.
For example, if certain investments (edges) carry extreme risk, then many of the investors may pose a degree bound on those very edges, all of which bounds can be posed in a single subgraph in $\mc H$ consisting of the high-risk investments.

\paragraph{\normalfont\textbf{Previous work}} 
In the special case when $\mc H=\{G\}$ and $\mc L = \emptyset$, one obtains the usual \emph{weighted capacitated $b$-matching problem}, where $b=b_G$~\cite{anstee1987polynomialBMatching}.
Another way to obtain this problem is when $\mc H=\emptyset$ and $\mc L = \{\{v\}:v\in V\}$, where $b(v)=g(\{v\})$ for all $v\in V$.

For the \emph{$\ell$-matchoid problem}, there exists an FPT algorithm parameterized by $\ell$ and the size of the solution~\cite{huang2021fptalgorithms}.
This immediately implies that for the simultaneous assignment problem with $\mc L=\emptyset$ and $c\equiv 1$, there exists an FPT algorithm parameterized by the size of the solution and the size of $\mc H$.

One can show that uncapacitated simultaneous assignments form a \emph{$(2|\mc H|+1)$-extendible system}, hence the greedy algorithm is a $(2|\mc H|+1)$-approximation algorithm~\cite{MestreExtendible}.
This result is not hard to extend to the capacitated version, hence one gets that there exists a $(2|\mc H|+1)$-approximation algorithm for the simultaneous assignment problem.

In the \emph{double matching problem}, we are given a bipartite graph $G=(S,T;E)$ and $S_1,S_2\subseteq S$ such that ${S_1\cup S_2=S}$.
It is NP-complete to decide whether there exists $M\subseteq E$ for which $|M|=|S|$ and both $M\cap E_1$ and $M\cap E_2$ are matchings, where $E_i$ denotes the edges induced by $T$ and $S_i$ for $i\in\{1,2\}$~\cite{madarasi2021matchings}.
The double matching problem is a special case of the simultaneous assignment problem, and this implies that it is NP-complete to decide whether a simultaneous assignment satisfying constraints~(\ref{eq:simultan:degLp:deg}) with equality exists, even if $\mc L=\emptyset$.

As a direct application of the bounded-violation algorithms given for the \emph{upper bounded degree $g$-polymatroid element problem} described in~\cite{berczi2019degreebounded}, one can find a vector $z\in\Z^E$ in polynomial time such that $wz$ is at least the weight of the optimal simultaneous assignment, and it satisfies constraints~(\ref{eq:simultan:degLp:nonNeg})~and~(\ref{eq:simultan:degLp:cap}), but it may violate constraints~(\ref{eq:simultan:degLp:deg}) by an additive factor of at most $(2|\mc H|-1)$, provided that $\mc L=\emptyset$.

\paragraph{\normalfont\textbf{Our results}}
The special case when $\mc H=\emptyset$ corresponds to the so-called \emph{weighted hierarchical $b$-matching problem}.
This problem was introduced in~\cite{hierarchicalBMatching}, where a strongly polynomial-time algorithm was given for the unweighted case.
Answering an open question from the same paper, our results in Section~\ref{sec:simultan:locLamSubGraphs} imply that the weighted version of the problem can be solved in strongly polynomial time as well.

If $\mc L=\emptyset$ and $\mc H$ is such that the subgraphs in $\mc H$ restricted to the edges incident to $v$ form a laminar system for each node $v\in V$, then we get back the \emph{laminar matchoid problem}~\cite{jenkynsMatchoid}.
In~\cite{Kaparis2014OnLM}, the laminar matchoid problem was solved in polynomial time when the so-called similarity condition holds, that is, the components of $b$ and $c$ are polynomial in the size of $V$.
Our results in Section~\ref{sec:simultan:locLamSubGraphs} also imply that the problem is solvable in strongly polynomial time even if the similarity condition does not hold.

In fact, Section~\ref{sec:simultan:locLamSubGraphs} solves the simultaneous assignment problem in strongly polynomial time when $\mc H$ is such that the subgraphs in $\mc H$ restricted to the edges incident to $v$ form a laminar system for each node $v\in V$.
This can be seen as a common generalization of the (weighted, capacitated) hierarchical $b$-matching and the laminar matchoid problems, in which the laminar matchoid problem subject to the degree-sum (or hierarchical) constraints is to be solved.
This approach settles the weighted, capacitated version of this common generalization even if both lower and upper bounds are given on the degrees in the subgraphs in $\mc H$ and on the degree-sums in the members of $\mc L$ --- generalizing the hierarchical $b$-matching and the laminar matchoid problems with the presence of capacities and both lower and upper bounds.

We show in Section~\ref{sec:simultan:trees} that the simultaneous assignment problem can be solved for trees when $\mc L=\emptyset$ and the so-called local-interval property holds --- the latter corresponds to the special case of the first motivation above when each customer is supposed to participate in an interval of the events.

Section~\ref{sec:simultan:hardness} proves the NP-hardness of $\alpha$-approximating the unweighted problem on bipartite graphs in two special cases for small enough constant $\alpha$: 1) the size of $\mc H$ is two and all connected components of $G$ are claws 2) each subgraph in $\mc H$ consists of two edges, the size of all members of $\mc L$ is at most two and all connected components of $G$ are claws.
Furthermore, we also show that the weighted problem is hard to approximate when $|\mc H|=2$ and $\mc L$ is empty.
Section~\ref{sec:simultan:hardness} also shows that the approximation guarantee of any polynomial-time approximation algorithm must grow asymptotically (essentially) linearly in the size of $\mc H$.
Section~\ref{sec:simultan:hardness:conseqToDM} shows that these results imply the inapproximability of the \emph{weighted non-cyclic} and the \emph{unweighted cyclic distance matching problems}.

Then, Section~\ref{sec:simultan:apx} introduces the concept of $(m,\ell)$-covers and gives a general framework for deriving approximation algorithms, which can be applied to any problem in which one has an efficiently solvable special case with which every instance of the problem can be covered ``equitably''.
The rest of Section~\ref{sec:simultan:apx} applies this approach to the simultaneous assignment problem.
First, Section~\ref{sec:simultan:apx:constAPX} gives an approximation algorithm when the size of $\mc H$ is small, and gives a bound on the integrality gap of a strengthened LP-relaxation.
In Section~\ref{sec:simultan:apx:uniformBounds}, we give an improved algorithm for the uniform case, that is, when the coordinates of all the degree bounds $b_{H}$ are the same.
Section~\ref{sec:simultan:apx:sparse} gives further improved approximation algorithms for special graph classes, for example, for pseudo-trees and sparse graphs.

Finally, Section~\ref{sec:simultan:open} concludes the paper with open questions.

\paragraph{\normalfont\textbf{Notation}}
Let $N_G(v)$ denote the set of the neighbors of $v$.
For a subset $X$ of the nodes, $\Delta_G(X)$ denotes the union of the edges incident to the nodes in $X$.
We use $\deg_G(v)$ to denote the degree of node $v$ in $G$.
The maximum of the empty set is $-\infty$ by definition.
Given a function $f:A\to B$, both $f(a)$ and $f_a$ denote the value $f$ assigns to $a\in A$, and let $f(X)=\sum_{a\in X}f(a)$ for $X\subseteq A$.
Let $\chi_Z$ denote the characteristic vector of set $Z$, that is, $\chi_Z(y)=1$ if $y\in Z$, and $0$ otherwise.
Occasionally, the braces around sets consisting of a single element are omitted, for example, $\Delta_G(\{v\})=\Delta_G(v)$ for $v\in V$.
The power set of a set $X$ is denoted by $2^X$.
Let $\N$ and $\Z_+$ denote the sets of positive and non-negative integers, respectively.

\section{Tractable Cases}
This section introduces some polynomial-time solvable special cases of the simultaneous assignment problem.

\subsection{Locally Laminar Subgraphs}\label{sec:simultan:locLamSubGraphs}
A set system $\mc F$ is \emph{laminar} if, for any two members $X,X'\in\mc F$, either $X\subseteq X'$, $X'\subseteq X$ or $X\cap X'=\emptyset$ holds.
A set $\mc H$ of the subgraphs of $G$ is \emph{laminar} if the edge sets of the subgraphs in $\mc H$ form a laminar system.
We say that $\mc H$ is \emph{locally laminar} if, for each node $v\in V$, the subgraphs in $\mc H$ restricted to~$\Delta_G(v)$ form a laminar system, that is, $\mc F_v=\{\Delta_H(v) : H\in\mc H\}$ is laminar for all $v\in V$.

By definition, if $\mc H$ is laminar, then it is locally laminar.
The reverse direction, however, does not necessarily hold.
Figure~\ref{fig:simultan:locLamEg} gives a locally laminar example with $\mc H=\{H_1,H_2\}$, where $H_1$ and $H_2$ are the two highlighted subgraphs, respectively.
In this case, $\mc H$ is locally laminar, but it is not laminar.

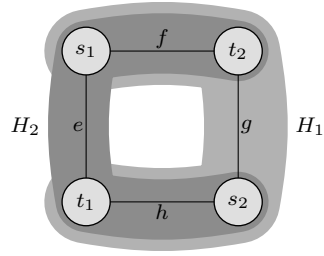
\begin{figure}
  \centering
  \begin{tikzpicture}[scale=1]
    \SetVertexMath
    dot/.style = {circle, draw, semithick,
      inner sep=0pt, minimum size=3pt,
      node contents={}},
    \tikzset{edge/.style = {->,> = latex'}}
    \Vertex[x=-1, y=1,L={s_1}]{s_1}
    \Vertex[x=1, y=1,L={t_2}]{t_2}
    \Vertex[x=-1, y=-1,L={t_1}]{t_1}
    \Vertex[x=1, y=-1,L={s_2}]{s_2}
    \draw[] (s_1) -- (t_2) node [midway, above=-2pt, opacity=100] {$f$};
    \draw[] (s_2) -- (t_2) node [midway, right=-2pt, opacity=100] {$g$};
    \draw[] (s_2) -- (t_1) node [midway, below=-2pt, opacity=100] {$h$};
    \draw[] (s_1) -- (t_1) node [midway, left=-2pt, opacity=100] {$e$};

    \begin{pgfonlayer}{background}
      \begin{scope}[opacity=.8,transparency group]
        \highlight{11mm}{black!30}{(s_1.center) to [bend left =10] (t_2.center) to [bend left =10] (s_2.center) to [bend left =10] (t_1.center)}
      \end{scope}
      \begin{scope}[opacity=.8,transparency group]
        \highlight{8mm}{black!45}{(t_2.center) to [bend right =10] (s_1.center) to[bend right =10] (t_1.center) to[bend right =10] (s_2.center)}
      \end{scope}
      \begin{pgfonlayer}{background}
      \end{pgfonlayer}
    \end{pgfonlayer}
    \node at (-1.8,0){$H_2$};
    \node at (+1.95,0){$H_1$};
  \end{tikzpicture}
  \caption{A locally laminar setting which is not laminar.
    Here $H_1$ and $H_2$ are the light and dark subgraphs, respectively.}\label{fig:simultan:locLamEg}
\end{figure}

In the rest of this section, $\mc H$ is assumed to be locally laminar.
A polynomial-time algorithm will be given to find a maximum-weight simultaneous assignment, in addition, the description of the polyhedron of feasible solutions will be derived.
First, the case of bipartite graphs is considered.

\subsubsection{Bipartite Graphs with Special Degree-sum Constraints}\label{sec:simultan:laminarBP}
Let $G=(S,T;E)$ be a bipartite graph, and let $V=S\cup T$.
In this section, we assume that no member of $\mc L\subseteq 2^{V} $ intersects both $S$ and $T$.

\begin{theorem}\label{thm:simultan:bpLocLam}
  Suppose that $G$ is bipartite, $\mc H$ is locally laminar, and also that each member of $\mc L$ is a subset of either $S$ or $T$.
  Then, the simultaneous assignment problem can be solved in strongly polynomial time and their polyhedron is described by~(\ref{lp:simultan:degLp}).
  Furthermore, the problem remains tractable even if some of the inequalities in~(\ref{lp:simultan:degLp}) are tightened to equality.
\end{theorem}
\begin{proof}
  In this case, the matrix of~(\ref{lp:simultan:degLp}) can be obtained as the transposes of the incidence matrices of two laminar systems written under each other, therefore it is a network matrix~\cite[Page 151]{AF11}.
  As the right-hand side is integer,~(\ref{lp:simultan:degLp}) describes the polyhedron of simultaneous assignments --- even if some of the inequalities are tightened to equality~\cite[Page 152]{AF11}.
  This implies that the problem can be reduced to the \emph{weighted circulation problem}~\cite[Page 158]{AF11}, and hence it is solvable in strongly polynomial time~\cite{Tardos1985}.
  \qed
\end{proof}

Note that Theorem~\ref{thm:simultan:bpLocLam} does not hold for arbitrary laminar $\mc L$.
Figure~\ref{fig:simultan:bpLamConditionNec} shows an example, where a member of $\mc L$ intersects both $S$ and $T$.
In this case, (\ref{lp:simultan:degLp}) is not integer, since $x\equiv \frac{1}{2}$ is the optimal solution with objective value $\frac{3}{2}$.
The next section derives the description of the polyhedron of the degree-sum-constrained simultaneous assignments for arbitrary laminar $\mc L$ and non-bipartite graphs.

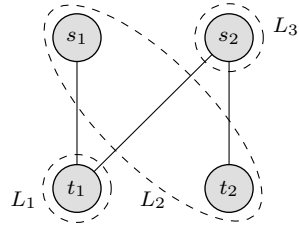
\begin{figure}
  \centering
  \begin{tikzpicture}[scale=1]
    \SetVertexMath
    dot/.style = {circle, draw, semithick,
      inner sep=0pt, minimum size=3pt,
      node contents={}},
    \tikzset{edge/.style = {->,> = latex'}}
    \Vertex[x=-1, y=1,L={s_1}]{s_1}
    \Vertex[x=1, y=1,L={s_2}]{s_2}
    \Vertex[x=-1, y=-1,L={t_1}]{t_1}
    \Vertex[x=1, y=-1,L={t_2}]{t_2}
    \draw[] (t_2) -- (s_2);
    \draw[] (s_2) -- (t_1);
    \draw[] (s_1) -- (t_1);
    \begin{scope}[rotate=45]
      \draw[dashed] (0,0) ellipse (.7cm and 1.9cm);
    \end{scope}
    \draw[dashed] (1,1) ellipse (.45cm);
    \draw[dashed] (-1,-1) ellipse (.45cm);
    \node at (0,-1.15){$L_2$};
    \node at (-1.7,-1.15){$L_1$};
    \node at (1.75,1.15){$L_3$};
  \end{tikzpicture}
  \caption{For $\mc L=\{\{t_1\},\{s_1,t_2\},\{s_2\}\}$, $\mc H =\emptyset$, $g\equiv 1$, $w\equiv 1$ and $c\equiv\infty$, the optimal fractional solution is $x\equiv \frac{1}{2}$, which shows that (\ref{lp:simultan:degLp}) is not integer.}\label{fig:simultan:bpLamConditionNec}
\end{figure}

\subsubsection{Non-bipartite Graphs}
Throughout this section, assume that $\mc H$ is locally laminar.
In this section, a polynomial-time algorithm is given to solve the weighted capacitated degree-sum-constrained simultaneous assignment problem under this assumption.
The description of the polyhedron of feasible simultaneous assignments will be derived as well.
As we have already seen,~(\ref{lp:simultan:degLp}) does not describe the convex hull of degree-sum-constrained simultaneous assignments in general.
Note that this remains the case even if one leaves out either constraints~(\ref{eq:simultan:degLp:deg})~or~(\ref{eq:simultan:degLp:card}).
The following definition and two theorems from~\cite[Page 594-598]{SchrijverYellowBook} will be useful.

\begin{definition}
  An integer matrix $M\in\Z^{m\times n}$ is \emph{bidirected} if $\sum_{i=1}^m|M_{ij}|=2$ for all $j\in\{1,\dots,n\}$.
\end{definition}

For a matrix $M\in\Z^{m\times n}$ and vectors $a,b\in\Z^m$ and $c,d\in\Z^n$, we consider the integer solutions of
\begin{equation}\label{eq:simultan:bidirProblem}
  \{x\in\Z^n : d\leq x\leq c, a\leq Mx\leq b\}.
\end{equation}

\begin{theorem}\label{thm:simultan:bidirPolytope}
  For a bidirected matrix $M\in\Z^{m\times n}$ and for arbitrary vectors $a,b\in\Z^m$ and $c,d\in\Z^n$, the convex hull of the integer solutions of~(\ref{eq:simultan:bidirProblem}) is described by the following system.
  \vspace{-3mm}
  \begin{subequations}
    \begin{align}\label{lp:simultan:bidirProglemHull}
      \tag{LP5}
      \quad\quad\quad\quad\quad&&\nonumber\\
      x&\in\R^n&\\
      d\leq x&\leq c&\\
      a\leq M&x\leq b&\\
      \frac{1}{2}((\chi_U-\chi_W)M+\chi_F-\chi_H)x&\leq \left\lfloor\frac{1}{2}(b(U)-a(W)+c(F)-d(H))\right\rfloor&\nonumber\\
                               &\hspace{-1.75cm}\text{for all disjoint } U,W\subseteq\{1,\dots,m\} \text{ and for all partition } F,H&\nonumber\\
                               &\hspace{-1.7cm}\text{of } \delta(U\cup W) \text{ with } b(U)-a(W)+c(F)-d(H)\text{ odd,}&\label{eq:simultan:bidirProglemHull:oddCon}
    \end{align}
  \end{subequations}
  where $\delta(U\cup W)=\{j\in \{1,\dots,n\} : \sum_{i\in U\cup W}|M_{ij}|=1\}$.
\end{theorem}
\begin{theorem}\label{thm:simultan:bidirSolvable}
  For a bidirected matrix $M\in\Z^{m\times n}$, and arbitrary vectors $a,b\in\Z^m$, $c,d\in\Z^n$ and $w\in\Q^n$, an integer vector $x$ maximizing $wx$ over~(\ref{eq:simultan:bidirProblem}) can be found in strongly polynomial time.
\end{theorem}
In what follows, we show that Theorems~\ref{thm:simultan:bidirPolytope}~and~\ref{thm:simultan:bidirSolvable} also hold in the slightly more general case when $M\in\Z^{m\times n}$ is such that $\sum_{i=1}^m|M_{ij}|\leq 2$ for all $j\in[n]$.

\begin{corollary}\label{cor:simultan:subBidirPolytope}
  Let $M\in\Z^{m\times n}$ be a matrix such that $\sum_{i=1}^m|M_{ij}|\leq 2$ for all ${j\in\{1,\dots,n\}}$, and let $a,b\in\Z^m$, $c,d\in\Z^n$.
  Then, the convex hull of the integer solutions of~(\ref{eq:simultan:bidirProblem}) is described by~(\ref{lp:simultan:bidirProglemHull}).
\end{corollary}
\begin{proof}
  Without loss of generality, one can assume that $M$ has no full-zero columns.
  The proof is by induction on the number of columns $j$ for which $\sum_{i=1}^m|M_{ij}|=1$.
  If there are no such columns, then the statement follows by Theorem~\ref{thm:simultan:bidirPolytope}.
  Let $j$ be the index of a column for which $\sum_{i=1}^m|M_{ij}|=1$.
  Now, one can simply add constraint $d_j\leq x_j\leq c_j$ explicitly to the system --- which is clearly redundant as it already appears in $d\leq x\leq c$.
  This means that $M$ is extended with an $(m+1)^{\text{th}}$ row $q=\chi_j\in\{0,1\}^n$, and hence by induction,~(\ref{lp:simultan:bidirProglemHull}) describes the convex hull of the extended problem.
  For the extended model, any constraints in~(\ref{eq:simultan:bidirProglemHull:oddCon}) with $U,W\subseteq\{1,\dots,m+1\}$ for which $m+1\in U\cup W$ is also implied by those generated before $q$ has been added to $M$.
  This completes the proof of the statement.
  \qed
\end{proof}

\begin{corollary}\label{cor:simultan:subBidirSolvable}
  Let $M\in\Z^{m\times n}$ be a matrix such that $\sum_{i=1}^m|M_{ij}|\leq 2$ for all $j\in\{1,\dots,n\}$, and let $d,c\in\Z^n,a,b\in\Z^m,w\in\Q^n$.
  Then, an integer vector $x$ maximizing $wx$ over~(\ref{eq:simultan:bidirProblem}) can be found in strongly polynomial time.
\end{corollary}
\begin{proof}
  Similarly to the proof of Corollary~\ref{cor:simultan:subBidirPolytope}, one can achieve that $\sum_{i=1}^m|M_{ij}|=2$ for all $j$, and hence Theorem~\ref{thm:simultan:bidirSolvable} can be applied.
  \qed
\end{proof}

Now, we show that the locally laminar simultaneous assignment problem can be formulated in such a way that it fits the framework given by~(\ref{eq:simultan:bidirProblem}), where $M$ is such that $\sum_{i=1}^m|M_{ij}|\leq 2$ for all $j$ --- and hence one can apply Corollaries~\ref{cor:simultan:subBidirPolytope}~and~\ref{cor:simultan:subBidirSolvable}.
First, consider the following notation.
For a laminar system $\mc F$, let $\mc C(\mc F)$ denote the inclusion-wise maximal sets in $\mc F$.
The maximal sets in $\mc F$ inside a member $X$ in $\mc F$ will be denoted by $\mc C(\mc F,X)$.

For a degree-sum-constrained simultaneous assignment $x$, let $y^v_F=\sum_{e\in F}x_e$ for $v\in V$ and $F\in\mc F_v$.
Furthermore, let $z_L=\sum_{v\in L}\sum_{e\in\Delta_G(v)}x_e$ for $L\in\mc L$.
That is, $y^v_F$ is the $x$-degree of node $v$ restricted to $F$ and $z_L$ is the sum of the $x$-degrees of the nodes in $L$ --- which appear as the left-hand side of~(\ref{eq:simultan:degLp:deg})~and~(\ref{eq:simultan:degLp:card}), respectively.
By definition,
\begin{equation}\label{eq:simultan:reqY}
  y^v_F=\sum_{e\in F\setminus\bigcup\mc C(\mc F_v,F)}x_e+\sum_{F'\in\mc C(\mc F_v,F)}y^v_{F'}
\end{equation}
holds for all $v\in V$ and $F\in\mc F_v$.
Without loss of generality, assume that $\{v\}\in\mc L$ for all $v\in V$ (if this is not the case for some $v\in V$, then one can add $\{v\}$ to $\mc L$ and set $g(\{v\})=\infty$).
Then,
\begin{equation}\label{eq:simultan:zForSingeltons}
  z_{\{v\}}=\sum_{e\in \Delta_G(v)\setminus\bigcup\mc C(\mc F_v)}x_e+\sum_{F\in\mc C(\mc F_v)}y^v_F
\end{equation}
holds for all $v\in V$ as well.
Similarly to~(\ref{eq:simultan:reqY}),
\begin{equation}\label{eq:simultan:reqZ}
  z_L=\sum_{v\in L\setminus\bigcup\mc C(\mc L,L)}z_{\{v\}}+\sum_{L'\in\mc C(\mc L,L)}z_{L'}
\end{equation}
holds for all $L\in\mc L$.
Considering $x,y$ and $z$ as variables, and combining (\ref{lp:simultan:degLp})~with equations~(\ref{eq:simultan:reqY}),~(\ref{eq:simultan:zForSingeltons})~and~(\ref{eq:simultan:reqZ}), one obtains the following linear program.
\begin{subequations}
  \begin{align}\label{lp:simultan:degLpExtension}
    \tag{LP2}
    \max&\sum_{st\in E}w_{st}x_{st}\\
    \mbox{s.t.}\quad\quad\quad\quad\quad&&\nonumber\\
    x&\in\R_+^{E}&\\
    y^v&\in\R_+^{\mc F_v}&\forall v\in V\\
    z&\in\R_+^{\mc L}&\\
    x&\leq c &\label{eq:simultan:degLpExtension:cap}\\
    y^v_{\Delta_H(v)}&\leq b_H(v)&\forall H\in\mc H,v\in V_H\label{eq:simultan:degLpExtension:deg}\\
    z&\leq g &\label{eq:simultan:degLpExtension:degSum}\\
    \sum_{e\in F\setminus\bigcup\mc C(\mc F_v,F)}\!\!x_e+\!\sum_{F'\in\mc C(\mc F_v,F)}y^v_{F'}-y^v_F&=0&\forall v\in V,\ \forall F\in\mc F_v\label{eq:simultan:degLpExtension:c1}\\
    \sum_{e\in \Delta_G(v)\setminus\bigcup\mc C(\mc F_v)}\!\!x_e+\!\sum_{F\in\mc C(\mc F_v)}y^v_F-z_{\{v\}}&=0&\forall v\in V\label{eq:simultan:degLpExtension:c2}\\
    \sum_{v\in L\setminus\bigcup\mc C(\mc L)}z_{\{v\}}+\sum_{L'\in\mc C(\mc L)}z_{L'}-z_L&=0&\forall L\in \mc L\setminus\{\{v\}:v\in V\}\label{eq:simultan:degLpExtension:c3}
  \end{align}
\end{subequations}

By construction, the solutions to~(\ref{lp:simultan:degLpExtension}) restricted to $x$ are exactly the feasible simultaneous assignments.
Note that each variable appears at most twice in constraints~(\ref{eq:simultan:degLpExtension:c1}),~(\ref{eq:simultan:degLpExtension:c2}) and~(\ref{eq:simultan:degLpExtension:c3}) with coefficient $1$ or $-1$, hence the matrix $M$ given by these three sets of constraints is such that $\sum_{i=1}^m|M_{ij}|\leq 2$.
Therefore, Corollary~\ref{cor:simultan:subBidirSolvable} immediately implies the following.
\begin{theorem}\label{thm:simultan:locLamCaseSolvable}
  If $\mc H$ is locally laminar, then the simultaneous assignment problem can be solved in strongly polynomial time.
\end{theorem}
As it has been already mentioned in Section~\ref{sec:simultan:intro}, the hierarchical $b$-matching problem is exactly the simultaneous assignment problem with $\mc H=\emptyset$.
Since in this case $\mc H$ is locally laminar, Theorem~\ref{thm:simultan:locLamCaseSolvable} can be applied.
\begin{corollary}
  The weighted, capacitated hierarchical $b$-matching problem is solvable in strongly polynomial time.
\end{corollary}
In the case when $\mc L=\emptyset$ and $\mc H$ is laminar, we get back the laminar matchoid problem, hence we obtain the following.
\begin{corollary}
  The weighted laminar matchoid problem can be solved in strongly polynomial time.
\end{corollary}

In fact, Theorem~\ref{thm:simultan:locLamCaseSolvable} applies even if we pose both \emph{lower} and upper bounds in constraints (\ref{eq:simultan:degLpExtension:cap}),~(\ref{eq:simultan:degLpExtension:deg}) and~(\ref{eq:simultan:degLpExtension:degSum}).
This immediately implies the following:
\begin{theorem}
  The locally laminar case can be also solved when both \emph{lower} and upper bounds are given in~(\ref{eq:simultan:degLp:cap}),~(\ref{eq:simultan:degLp:deg})~and~(\ref{eq:simultan:degLp:card}), that is, on the capacities of the edges, on the degrees in each subgraph $H\in\mc H$ and on the degree-sum in each $L\in\mc L$.
\end{theorem}

\begin{corollary}
  The weighted, capacitated hierarchical $b$-matching problem is solvable in strongly polynomial time even when both \emph{lower} and upper bounds are given on the capacities of the edges and on the degree sums in each member of the laminar family.
\end{corollary}

Note that Theorem~\ref{thm:simultan:bidirPolytope} gives a description of the convex hull of the integer points of~(\ref{lp:simultan:degLpExtension}).
Substituting variables $y$ and $z$, this in turn implies a description of the convex hull of the integer points of~(\ref{lp:simultan:degLp}) when $\mc H$ is locally laminar.
These constraints will be referred to as \emph{projected blossom inequalities}.

\begin{theorem}\label{thm:simultan:locallyLaminarPolytope}
  If $\mc H$ is locally laminar, then~(\ref{lp:simultan:degLp}) extended with the projected blossom inequalities describes the convex hull of simultaneous assignments.
\end{theorem}

The projected blossom inequalities can be used to strengthen~(\ref{lp:simultan:degLp})~if $\mc H$ is not locally laminar.
Namely, one can add the projected blossom inequalities to~(\ref{lp:simultan:degLp})~for all $F\subseteq E$ for which $\mc H$ becomes locally laminar when restricted to $F$:
\begin{subequations}
  \begin{align}\label{lp:simultan:degLpStrong}
    \tag{LP1$^*$}
    \max\sum_{st\in E}&w_{st}x_{st}\\
    \mbox{s.t.}\quad\quad\quad\quad\quad&&\nonumber\\
    (\ref{eq:simultan:degLp:nonNeg})~(\ref{eq:simultan:degLp:cap})~(\ref{eq:simultan:degLp:deg})~(\ref{eq:simultan:degLp:card})& & \nonumber\\
    \mathcal B(F) & &\forall F\subseteq E:\text{$\restr{\mc H}{F}$ is locally laminar},
  \end{align}
\end{subequations}
where $\mathcal B(F)$ denotes the set of projected blossom inequalities when the problem is restricted to $F\subseteq E$.
By Theorem~\ref{thm:simultan:locallyLaminarPolytope}, (\ref{lp:simultan:degLpStrong}) becomes integer when the problem is restricted to a locally laminar edge set.
The integrality gap of this strengthened linear program will be investigated in Section~\ref{sec:simultan:apx}.

\subsection{When the Graph is a Tree}\label{sec:simultan:trees}
In Section~\ref{sec:simultan:hardness}, we will see that the simultaneous assignment problem is hard even if $G$ consists of node-disjoint claws and the size of $\mc H$ is two.
However, assuming that $G$ is a tree and $\mc L=\emptyset$, the problem becomes solvable provided that a natural assumption on $\mc H$ holds.
Motivated by the first application described in the introduction, consider the following definition.

\begin{definition}
  We say that $\mc H=\{H_1,\dots,H_k\}$ has the \emph{local-interval property} if, for each node $v\in V$, there exists a permutation $H_{i_1},\dots,H_{i_k}$ under which each edge in $\Delta_v$ is included in a (possibly empty) interval $H_{i_p},\dots,H_{i_q}$.
\end{definition}

Note that the local-interval property holds in the first motivation mentioned in the introduction when each customer selects an interval of the events.
Under these assumptions, the problem becomes solvable:

\begin{theorem}\label{thm:simultan:treeInterval}
  Let $G=(V,E)$ be a tree, let $\mc L=\emptyset$ and assume that $\mc H$ has the local-interval property.
  Then, the simultaneous assignment polyhedron is described by~(\ref{lp:simultan:degLp}) and hence the problem can be solved in strongly polynomial time.
\end{theorem}
\begin{proof}
  It suffices to show that the matrix of~(\ref{lp:simultan:degLp}) is a network matrix~\cite[Page 151]{AF11}.
  First, we introduce the following notations.
  For each $v\in V$, there exists a permutation $H_{i_1^v},\dots,H_{i_k^v}$ under which each edge $e\in\Delta(v)$ is contained in an interval by the local interval property.
  Let $p^e_v=\argmin\{j : e\in H_{i_j^v}\}$ and $q^e_v=\argmax\{j : e\in H_{i_j^v}\}$ for $v\in V$ and $e\in\Delta(v)$.

  The construction of the directed tree $T$ corresponding to the network matrix is as follows.
  For each $v\in V$, add $(k+1)$ new nodes $z_{0}^v,\dots,z_{k}^v$ to $T$.
  Also add a new arc from $z_{j-1}^v$ to $z_{j}^v$ for each $j\in\{1,\dots,k\}$, which is associated with $H_{i_j^v}$ --- and with the corresponding row of the network matrix.
  Let $r\in V$ be arbitrary and appoint it as the root node of $G$.
  For each edge $e=uv\in E$ of $G$ such that $u$ is the parent node of $v$ with respect to root $r$, unify $z^u_{q^e_u}$ and $z^v_{p^e_v}$ in $T$.
  This way, $T$ becomes a directed tree.
  Now, we are ready to describe the non-tree arcs defining the network matrix.
  For each edge $e=uv\in E$ of $G$ such that $u$ is the parent node of $v$ with respect to root $r$, add a new non-tree arc from $z^v_{q^e_v}$ to $z^u_{p^e_u}$.
  By construction, the base cycle of this arc contains exactly those arcs of $T$ (as forward arcs) which are associated with those rows of the matrix of~(\ref{lp:simultan:degLp}) in which the entry corresponding to $uv$ is one.
  \qed
\end{proof}

Later, we will give examples showing that one cannot drop any of the assumptions in Theorem~\ref{thm:simultan:treeInterval} (see Figures~\ref{fig:simultan:gapExamplek2}~and~\ref{fig:simultan:pseodoTreeGapEg}).
Observe that $\mc H$ has the local-interval property if its size is two, therefore, one obtains the following corollary of Theorem~\ref{thm:simultan:treeInterval}.

\begin{corollary}\label{cor:simultan:tree2Subgr}
  Let $G$ be a tree, let $\mc L=\emptyset$ and assume that the size of $\mc H$ is two.
  Then, the simultaneous assignment polyhedron is described by~(\ref{lp:simultan:degLp}) and hence the problem can be solved in strongly polynomial time.
\end{corollary}

\section{Hardness Results}\label{sec:simultan:hardness}
As we have already seen in the introduction, it is NP-complete to decide whether there exists a simultaneous assignment satisfying constraints~(\ref{eq:simultan:degLp:deg}) with equality.
In this section, multiple special cases of the problem will be shown hard to approximate.
All presented hardness results hold true even for bipartite graphs in the uncapacitated case.

\subsection{The Case of Two Subgraphs}
In this section, we show that the unweighted simultaneous assignment problem cannot be approximated arbitrarily well in multiple special cases.

\begin{theorem}\label{thm:simultan:sapAPXhard}
  The unweighted simultaneous assignment problem is NP-hard to approximate within any factor smaller than $\frac{570}{569}$ even if $|\mc H|=2$, the connected components of $G$ are claws and the size of every member in $\mc L$ is two.
\end{theorem}
\begin{proof}
  Given are three finite disjoint sets $X,Y,Z$ and a set of hyperedges $\mc E\subseteq X\times Y\times Z$, a subset of the hyperedges $F\subseteq\mc E$ is called \emph{$3$-dimensional matching} if $x_1\neq x_2, y_1\neq y_2$ and $z_1\neq z_2$ for any two distinct triples $(x_1, y_1, z_1), (x_2, y_2, z_2) \in F$.
  Finding a maximum-size 3-dimensional matching $F\subseteq\mc E$ cannot be approximated arbitrarily unless P=NP.
  In fact, the problem remains NP-hard to approximate better than $\frac{95}{94}$ even for \emph{$2$-regular} instances, that is, when each element of $X\cup Y\cup Z$ occurs in exactly two triples in $\mc E$~\cite{CHLEBIK2006320}.
  To reduce the 2-regular 3-dimensional matching problem to the simultaneous assignment problem, consider the following construction.

  Let $e^z_1$ and $e^z_2$ denote the two hyperedges incident to $z$ for all $z\in Z$.
  As each element occurs in exactly two triples, $|X|=|Y|=|Z|$ holds.
  First, define a bipartite graph $G=(S,T;E)$, where $S=X\cup \mc E\cup Y$, $T=\mc E$ and $E$ is as follows.
  For each $s\in S\cap(X\cup Y)$, add an edge between $s$ and the two hyperedges $e\in T$ incident to $s$; and connect the two occurrences $e\in S\cap\mc E$ and $e\in T\cap\mc E$ of each hyperedge $e\in\mc E$.
  Let $H_1$ and $H_2$ consist of the edges incident to $S\setminus Y$ and $S\setminus X$, respectively.
  We set $b_{H_1}(s)=b_{H_2}(s)=1$ for each $s\in S$.
  Let $L_z^T=\{e^z_1,e^z_2\}\subseteq T$, $\mc L=\{L_z^T : z\in Z\}$ and $g\equiv 3$.
  At the end of this proof, we will modify this construction so that the bipartite graph defined here becomes the union of disjoint claws.

  Figures~\ref{fig:simultan:3dimMatching}~and~\ref{fig:simultan:3dimMatchingConstruction} show an instance of the 2-regular 3-dimensional matching problem and the corresponding instance of the simultaneous assignment as per the construction above, respectively.
  Each hyperedge is represented by a unique line style,
  for example, the dotted lines in Figures~\ref{fig:simultan:3dimMatching}~and~\ref{fig:simultan:3dimMatchingConstruction} represent the same hyperedge $e=(x_1,y_1,z_1)$.
  Note that the edges represented by a straight line in Figure~\ref{fig:simultan:3dimMatchingConstruction} do not represent hyperedges, but the edges between hyperedges.

  Suppose that for some $\alpha\geq 1$, there exists a polynomial-time $\alpha$-approximation algorithm for the unweighted simultaneous assignment problem, and let $M$ be an $\alpha$-approximate solution.
  We construct a feasible solution $F$ to the 3-dimensional matching problem from $M$ in polynomial time and investigate its approximation ratio.
  For each node $e^z_i\in T\cap\mc E$, if there exist two edges $xe^z_i,ye^z_i\in M$ incident to $e^z_i\in T\cap\mc E$, then add $(x,y,z)$ to $F$.
  Note that $F$ is a feasible 3-dimensional matching because every element of $X\cup Y$ appears in at most one of the selected tuples (since $b_{H_1}(s)=b_{H_2}(s)=1$ for all $s\in S$) and the degree of at most one of $e^z_1\in T\cap\mc E$ and $e^z_2\in T\cap\mc E$ can be two (since $\{e^z_1,e^z_2\}\in\mc L$ and $g\equiv 3$).

  In what follows, the approximation ratio $\frac{|F^*|}{|F|}$ will be investigated, where $F^*$ denotes a largest 3-dimensional matching.
  Observe that $|F^*|\geq\frac{|\mc E|}{4}=\frac{|Z|}{2}$, because any inclusion-wise maximal 3-dimensional matching consists of at least $\frac{|\mc E|}{4}$ hyperedges.
  Let $M^*$ be an optimal solution to the simultaneous assignment problem.
  Note that $|M^*|\leq 3|Z|$, since the number of edges of $M$ that are incident to $L_z^T$ is at most 3 for $z\in Z$.
  Hence, one gets that
  \begin{equation}\label{eq:simultan:apxProofBound1}
    |M^*|\leq 3|Z|\leq 6|F^*|.
  \end{equation}

  By construction, it is clear that $|M^*|=|\mc E|+|F^*|$ and $|M|=|\mc E|+|F^*|$ hold, therefore
  \begin{equation}\label{eq:simultan:apxProofBound2}
    |M^*|-|M|=|F^*|-|F|
  \end{equation}
  follows.
  Using these observations, one gets that
  \begin{multline}\label{eq:simultan:apxProofBound3}
    \frac{|F|}{|F^*|}=\frac{|F^*|-(|M^*|-|M|)}{|F^*|}=1-\frac{|M^*|-|M|}{|F^*|}\geq 1-6\frac{|M^*|-|M|}{|M^*|}\\
    =1-6(1-\frac{|M|}{|M^*|})=1-6(1-\frac{1}{\alpha})=\frac{6}{\alpha}-5,
  \end{multline}
  where the first equation holds by (\ref{eq:simultan:apxProofBound2}) and the inequality follows by (\ref{eq:simultan:apxProofBound1}).
  Hence one gets that $\frac{|F^*|}{|F|}\leq\frac{1}{6/\alpha-5}$, provided that $|F|\neq 0$ and $\alpha<\frac{6}{5}$.
  To complete the proof, note that the right-hand side, $\frac{1}{6/\alpha-5}$, tends to $1$ as $\alpha$ monotone decreasingly tends to $1$, hence $\frac{|F^*|}{|F|}$ tends to 1 as well.
  This means that if one has an $\alpha$-approximation algorithm for the unweighted simultaneous assignment problem for some $\alpha<\frac{6}{5}$, then it is possible to construct a $(\frac{1}{6/\alpha-5})$-approximate solution to the 2-regular 3-dimensional matching problem in polynomial time.
  The latter problem is NP-hard to $\beta$-approximate for any $\beta<\frac{95}{94}$ even if each element of $X\cup Y\cup Z$ occurs in exactly two triples in $\mc E$, see~\cite{CHLEBIK2006320}, which implies that the unweighted simultaneous assignment problem is hard to $\alpha$-approximate whenever $\frac{1}{6/\alpha-5}<\frac{95}{94}$.
  Reordering this inequality, one gets that the unweighted simultaneous assignment problem is NP-hard to $\alpha$-approximate if $\alpha<\frac{570}{569}$.

  To complete the proof of the theorem, we modify the construction of the simultaneous assignment problem instance such that $G$ becomes the disjoint union of claws.
  In the construction above, the degree of each node of $s\in S\cap(X\cup Y)$ is exactly two and the degree-bound constraints posed in the two subgraphs $H_1$ and $H_2$ is $b_{H_1}(s)=b_{H_2}(s)=1$.
  For every such node $s\in S\cap (X\cup Y)$, add a disjoint copy of $s$ to $S$ and move one of the edges incident to $s$ to this new node.
  This way, the modified graph is the node-disjoint union of claws.
  Let $L_{s}^S$ consist of $s$ and its newly added copy, and add $L_{s}^S$ to $\mc L$.
  By setting $g(L_{s}^S)=1$ for $s\in S\cap (X\cup Y)$, one can ensure that any feasible simultaneous assignment problem contains at most one edge which is incident to $L_{s}^S$.
  Figure~\ref{fig:simultan:3dimMatchingConstructionTree} illustrates the construction when applied to the instance shown in Figure~\ref{fig:simultan:dmEg2}.
  Clearly, there is one-to-one correspondence between the feasible solutions to the original and the modified problems such that the cardinalities of the solutions are the same, which completes the proof of the theorem.
  \qed
\end{proof}

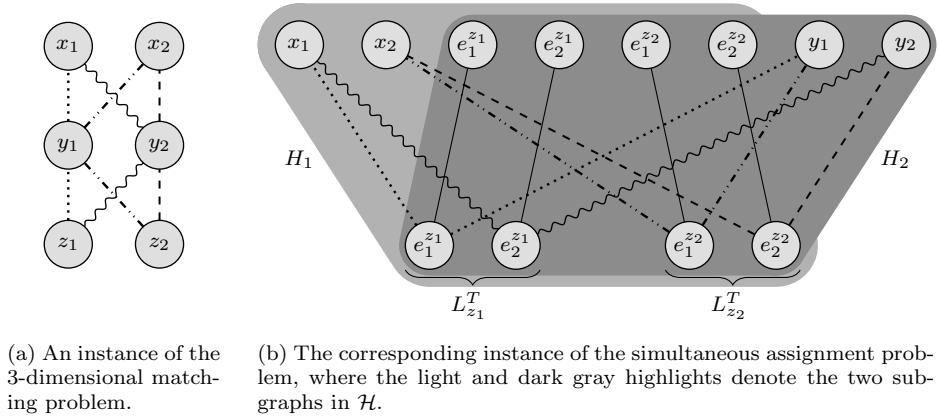
\begin{figure}
  \begin{subfigure}[t]{.23\textwidth}
    \centering
    \begin{tikzpicture}[xscale=.8,yscale=.75]
      \begin{scope}[yshift=.8cm]
        \SetVertexMath
        \grEmptyPath[form=1,x=0,y=0,RA=1.5,rotation=0,prefix=z]{2}
        \grEmptyPath[form=1,x=0,y=1.75,RA=1.5,rotation=0,prefix=y]{2}
        \grEmptyPath[form=1,x=0,y=3.5,RA=1.5,rotation=0,prefix=x]{2}
        \node[below=.57cm of z1] (a) {};
        \draw[dotted,line width=.9] (z0) -- (y0);
        \draw[dotted,line width=.9] (y0) -- (x0);

        \draw[dashed,line width=.75] (z1) -- (y1);
        \draw[dashed,line width=.75] (y1) -- (x1);

        \draw[wavy] (z0) -- (y1);
        \draw[wavy] (y1) -- (x0);

        \draw[dashdotdotted,line width=.9] (z1) -- (y0);
        \draw[dashdotdotted,line width=.9] (y0) -- (x1);
      \end{scope}
    \end{tikzpicture}
    \caption{An instance of the 3-dimensional matching problem.}\label{fig:simultan:3dimMatching}
  \end{subfigure}
  \hfill
  \begin{subfigure}[t]{.73\textwidth}
    \centering
    \begin{tikzpicture}[scale=.95,xscale=1,yscale=.8]
      \SetVertexMath
      \tikzset{VertexStyle/.append style = {minimum size = 18pt,inner sep=0pt}}
      \Vertex[x=0,y=3.5,L=x_1]{x0}
      \Vertex[x=1.2,y=3.5,L=x_2]{x1}

      \Vertex[x=2.4,y=3.5,L=e_1^{z_1}]{se1}
      \Vertex[x=3.6,y=3.5,L=e_2^{z_1}]{se2}
      \Vertex[x=4.8,y=3.5,L=e_1^{z_2}]{se3}
      \Vertex[x=6,y=3.5,L=e_2^{z_2}]{se4}

      \Vertex[x=7.2,y=3.5,L=y_1]{y0}
      \Vertex[x=7.2+1.2,y=3.5,L=y_2]{y1}

      \Vertex[x=1.8,y=0,L=e_1^{z_1}]{e0}
      \Vertex[x=3,y=0,L=e_2^{z_1}]{e1}
      \Vertex[x=5.4,y=0,L=e_1^{z_2}]{e3}
      \Vertex[x=6.6,y=0,L=e_2^{z_2}]{e2}

      \draw[dotted,line width=.9] (e0) -- (x0);
      \draw[dotted,line width=.9] (e0) -- (y0);

      \draw[wavy] (e1) -- (x0);
      \draw[wavy] (e1) -- (y1);

      \draw[dashed,line width=.75] (e2) -- (x1);
      \draw[dashed,line width=.75] (e2) -- (y1);

      \draw[dashdotdotted,line width=.9] (e3) -- (y0);
      \draw[dashdotdotted,line width=.9] (e3) -- (x1);

      \draw[] (e1) to [bend left = 0] (se2);
      \draw[] (e0) -- (se1);

      \draw[] (e2) to [bend left = 0] (se4);
      \draw[] (e3) -- (se3);

      \begin{scope}[decoration={brace,mirror,amplitude=2mm,raise=3.25mm}]
        \begin{scope}[every node/.style={midway,left,yshift=-8mm,xshift=3mm}]
          \draw[decorate] (e0.west) -- (e1.east)node(C1){$L^T_{z_1}$};
          \draw[decorate] (e3.west) -- (e2.east)node(C2){$L^T_{z_2}$};
        \end{scope}
      \end{scope}

      \begin{pgfonlayer}{background}
        \begin{scope}[transparency group,opacity=.8]
          \highlight{11mm}{black!30, fill=black!30}{(x0.center) to [bend left = 0] (se4.center)  to (e2.center) to (e0.center) to (x0.center)};
        \end{scope}
        \begin{scope}[transparency group,opacity=.8]
          \highlight{8mm}{black!45, fill=black!45}{(e2.center) to (y1.center) to (se1.center) to (e0.center) to (e2.center)};
        \end{scope}
        \begin{pgfonlayer}{background}
        \end{pgfonlayer}
      \end{pgfonlayer}
      \node at (8.25,1.5){$H_2$};
      \node at (0,1.5){$H_1$};
    \end{tikzpicture}
    \caption{The corresponding instance of the simultaneous assignment problem, where the light and dark gray highlights denote the two subgraphs in $\mc H$.}\label{fig:simultan:3dimMatchingConstruction}
  \end{subfigure}
  \caption{\vspace{-5mm}Illustration of the proof of Theorem~\ref{thm:simultan:sapAPXhard}.
  }\label{fig:simultan:dmEg2}
\end{figure}

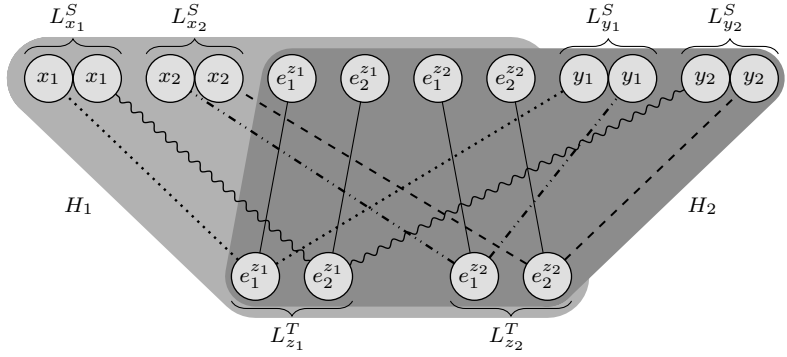
\begin{figure}
  \centering
  \begin{tikzpicture}[xscale=.8,yscale=.75]
    \SetVertexMath
    \tikzset{VertexStyle/.append style = {minimum size = 18pt,inner sep=0pt}}
    \Vertex[x=0,y=3.5,L=x_1]{x0}
    \Vertex[x=0+.8,y=3.5,L=x_1]{x0_}
    \Vertex[x=1.2+.8,y=3.5,L=x_2]{x1}
    \Vertex[x=1.2+.8+.8,y=3.5,L=x_2]{x1_}

    \Vertex[x=2.4+.8+.8,y=3.5,L=e_1^{z_1}]{se1}
    \Vertex[x=3.6+.8+.8,y=3.5,L=e_2^{z_1}]{se2}
    \Vertex[x=4.8+.8+.8,y=3.5,L=e_1^{z_2}]{se3}
    \Vertex[x=6+.8+.8,y=3.5,L=e_2^{z_2}]{se4}

    \Vertex[x=7.2+.8+.8,y=3.5,L=y_1]{y0}
    \Vertex[x=7.2+.8+.8+.8,y=3.5,L=y_1]{y0_}
    \Vertex[x=7.2+1.2+.8+.8+.8,y=3.5,L=y_2]{y1}
    \Vertex[x=7.2+1.2+.8+2*.8+.8,y=3.5,L=y_2]{y1_}

    \Vertex[x=1.8+.8+.8,y=0,L=e_1^{z_1}]{e0}
    \Vertex[x=3+.8+.8,y=0,L=e_2^{z_1}]{e1}
    \Vertex[x=5.4+.8+.8,y=0,L=e_1^{z_2}]{e3}
    \Vertex[x=6.6+.8+.8,y=0,L=e_2^{z_2}]{e2}

    \draw[dotted,line width=.9] (e0) -- (x0);
    \draw[dotted,line width=.9] (e0) -- (y0);

    \draw[wavy] (e1) -- (x0_);
    \draw[wavy] (e1) -- (y1);

    \draw[dashed,line width=.75] (e2) -- (x1_);
    \draw[dashed,line width=.75] (e2) -- (y1_);

    \draw[dashdotdotted,line width=.9] (e3) -- (y0_);
    \draw[dashdotdotted,line width=.9] (e3) -- (x1);

    \draw[] (e1) to [bend left = 0] (se2);
    \draw[] (e0) -- (se1);

    \draw[] (e2) to [bend left = 0] (se4);
    \draw[] (e3) -- (se3);

    \begin{scope}[decoration={brace,mirror,amplitude=2mm,raise=3.25mm}]
      \begin{scope}[every node/.style={midway,left,yshift=-8mm,xshift=3mm}]
        \draw[decorate] (e0.west) -- (e1.east)node(C1){$L^T_{z_1}$};
        \draw[decorate] (e3.west) -- (e2.east)node(C2){$L^T_{z_2}$};
      \end{scope}
    \end{scope}

    \begin{scope}[decoration={brace,amplitude=2mm,raise=3.25mm}]
      \begin{scope}[every node/.style={midway,left,yshift=8mm,xshift=3mm}]
        \draw[decorate] (x0.west) -- (x0_.east)node(C1){$L_{x_1}^S$};
        \draw[decorate] (x1.west) -- (x1_.east)node(C1){$L_{x_2}^S$};
        \draw[decorate] (y0.west) -- (y0_.east)node(C1){$L_{y_1}^S$};
        \draw[decorate] (y1.west) -- (y1_.east)node(C1){$L_{y_2}^S$};
      \end{scope}
    \end{scope}

    \begin{pgfonlayer}{background}
      \begin{scope}[opacity=.8,transparency group]
        \highlight{11mm}{black!30, fill=black!30}{(x0.center) to [bend left = 0] (se4.center)  to (e2.center) to (e0.center) to (x0.center)}
      \end{scope}
      \begin{scope}[opacity=.8,transparency group]
        \highlight{8mm}{black!45, fill=black!45}{(e2.center) to (y1_.center) to (se1.center) to (e0.center) to (e2.center)}
      \end{scope}
      \begin{pgfonlayer}{background}
      \end{pgfonlayer}
    \end{pgfonlayer}
    \node at (10.75,1.25){$H_2$};
    \node at (.5,1.25){$H_1$};
  \end{tikzpicture}
  \caption{Illustration of the splitting operation used in the proof of Theorem~\ref{thm:simultan:sapAPXhard} to the instance given in Figure~\ref{fig:simultan:dmEg2}.}\label{fig:simultan:3dimMatchingConstructionTree}
\end{figure}

Next, we show that the problem is also hard to approximate when the size of each subgraph in $\mc H$ is assumed to be two.
To this end, we modify the construction given in the proof of Theorem~\ref{thm:simultan:sapAPXhard} as follows.
Let $G'=G$, $\mc L'=\mc L$, $g'=g$, and let $\mc H'$ consist of all subgraphs which are obtained as the intersection of a claw in $G$ and either $H_1$ or $H_2$.
Setting $b'_{H'}\equiv 1$ for all $H'\in\mc H'$, an equivalent problem is obtained in which each subgraph in $\mc H'$ consists of only two edges.
Hence, Theorem~\ref{thm:simultan:sapAPXhard} implies the following.

\begin{corollary}
  The unweighted simultaneous assignment problem is NP-hard to approximate within any factor smaller than $\frac{570}{569}$ even if all subgraphs in $\mc H$ consist of at most two edges, the size of all members of $\mc L$ is two and the connected components of $G$ are claws.
\end{corollary}

The proof of Theorem~\ref{thm:simultan:sapAPXhard} heavily relies on the cardinality constraints~(\ref{eq:simultan:degLp:card}).
Now, we assume that $\mc L=\emptyset$ and show that the weighted simultaneous assignment problem remains hard to approximate for two subgraphs.

\begin{theorem}\label{thm:simultan:sapAPXhard2}
  The weighted simultaneous assignment problem is NP-hard to approximate within any factor smaller than $\frac{760}{759}$, even if $|\mc H|=2$, $\mc L=\emptyset$, each occurring weight is either $1$ or $2$ and $G$ is bipartite with maximum degree at most four.
\end{theorem}
\begin{proof}
  Similarly to the proof of Theorem~\ref{thm:simultan:sapAPXhard}, let $X,Y,Z$ and $\mc E\subseteq X\times Y\times Z$ be an instance of the 2-regular 3-dimensional matching problem.
  We reduce the 2-regular 3-dimensional matching problem to the simultaneous assignment problem as follows.

  First, define a bipartite graph $G=(S,T;E)$, where $S=X\cup \mc E\cup Y$, $T=\mc E$ and $E$ is defined as follows.
  For each $s\in S\cap(X\cup Y)$, add an edge of weight 1 between $s$ and both hyperedges $e\in T$ incident to $s$; and connect each $e^z_2\in S\cap\mc E$ to hyperedges $e^z_1,e^z_2\in T$ with an edge of weight 2 for each $z\in Z$, where $e^z_1$ and $e^z_2$ denote the two hyperedges incident to $z$.
  Finally, also add an edge of weight 1 between $e_1^z\in S\cap\mc E$ and $e_1^z\in T\cap\mc E$ for each $z\in Z$.
  Let $S_1=S\setminus Y$ and $S_2=S\setminus X$, and define $H_i$ as the subgraph induced by $S_i\cup T$ for $i\in\{1,2\}$.
  We set $\mc L = \emptyset$ and $b_1(v)=b_2(v)=1$ for all $s\in S\cup T$.
  Figure~\ref{fig:simultan:3dimmatchingconstructionWAPX} shows the construction for the 2-regular 3-dimensional matching instance given by Figure~\ref{fig:simultan:3dimMatching}.
  Each hyperedge is represented by a unique line style, for example, the dotted lines in Figures~\ref{fig:simultan:3dimMatching}~and~\ref{fig:simultan:3dimmatchingconstructionWAPX} represent the same hyperedge $e=(x_1,y_1,z_1)$.
  Note that the edges represented by a straight line in Figure~\ref{fig:simultan:3dimmatchingconstructionWAPX} do not represent hyperedges, but the edges between hyperedges.

  Suppose that for some $\alpha\geq 1$, there exists a polynomial-time $\alpha$-approximation algorithm for the maximum-weight simultaneous assignment problem and let $M$ be an $\alpha$-approximate solution.
  For any $z\in Z$, consider the following three transformations:
  1) If both $e^z_1\in T$ and $e^z_2\in T$ have degree two in $M$, then replace the edges incident to $e^z_2\in T$ with edge $e^z_2e^z_2$.
  2) If exactly one of $e^z_1,e^z_2\in T$ has degree two in $M$, then connect the other one to $e^z_2\in S$ and remove all of its incident edges.
  3) If none of $e^z_1,e^z_2\in T$ has degree two in $M$, then add edges $e^z_1e^z_1$ and $e^z_2e^z_2\in T$, and remove any other incident edges.
  After these operations, $M$ remains feasible and its weight does not decrease, hence it remains $\alpha$-approximate.
  Let us construct a 3-dimensional matching $F$ from $M$ as follows.
  For each node $e^z_i\in T\cap\mc E$, if two edges $xe^z_i,ye^z_i\in M$ are incident to $e^z_i$ in $M$, then add $(x,y,z)$ to $F$.
  Note that $F$ is a feasible 3-dimensional matching, as at most one of $s^z_1,s^z_2$ has degree two in $M$ after the transformations.

  In what follows, the approximation ratio $\frac{|F^*|}{|F|}$ will be investigated, where $F^*$ denotes a largest 3-dimensional matching.
  Observe that $|F^*|\geq\frac{|\mc E|}{4}$, because any maximal 3-dimensional matching consists of at least $\frac{|\mc E|}{4}$ hyperedges.
  Let $M^*$ be an optimal simultaneous assignment and perform the transformations above on $M^*$.
  Note that $w(M^*)\leq 2|\mc E|$, since the sum of the weights of those edges of $M$ that are incident to a node $t\in T$ is at most two.
  Hence, we get that
  \begin{equation*}
    w(M^*)\leq 2|\mc E|\leq 8|F^*|.
  \end{equation*}

  By transformations 1), 2) and 3), both $w(M^*)=2|\mc E|-|Z|+|F^*|=3|Z|+|F^*|$ and $w(M)=2|\mc E|-|Z|+|F|=3|Z|+|F|$ hold, therefore
  \begin{equation*}
    w(M^*)-w(M)=|F^*|-|F|
  \end{equation*}
  follows.
  Similarly to~(\ref{eq:simultan:apxProofBound3}), this implies that $\frac{|F^*|}{|F|}\leq\frac{1}{8/\alpha-7}$ provided that $|F|\neq 0$ and $\alpha<\frac{8}{7}$.
  To complete the proof, note that the right-hand side, $\frac{1}{8/\alpha-7}$, tends to $1$ as $\alpha$ monotone decreasingly tends to 1, hence $\frac{|F^*|}{|F|}$ tends to $1$ as well.
  This means that if one has an $\alpha$-approximation algorithm for the weighted simultaneous assignment problem for some $\alpha<\frac{8}{7}$, then it is possible to construct a $(\frac{1}{8/\alpha-7})$-approximate solution to the 2-regular 3-dimensional matching problem in polynomial time.
  The latter problem is NP-hard to $\beta$-approximate for any $\beta<\frac{95}{94}$ even if each element of $X\cup Y\cup Z$ occurs in exactly two triples in $\mc E$~\cite{CHLEBIK2006320}, which implies that the simultaneous assignment problem is hard to $\alpha$-approximate whenever $(\frac{1}{8/\alpha-7})<\frac{95}{94}$.
  Reordering this inequality, one gets that the simultaneous assignment problem is NP-hard to $\alpha$-approximate if $\alpha<\frac{760}{759}$, which completes the proof.
  \qed
\end{proof}

\begin{figure}
  \centering
  \begin{tikzpicture}[xscale=.8,yscale=.75]
    \SetVertexMath
    \tikzset{VertexStyle/.append style = {minimum size = 18pt,inner sep=0pt}}
    \grEmptyPath[form=1,x=0,y=3.5,RA=1.2,rotation=0,prefix=x]{2}
    \Vertex[x=2.4,y=3.5,L=e_1^{z_1}]{se1}
    \Vertex[x=3.6,y=3.5,L=e_2^{z_1}]{se2}
    \Vertex[x=4.8,y=3.5,L=e_1^{z_2}]{se3}
    \Vertex[x=6,y=3.5,L=e_2^{z_2}]{se4}
    \grEmptyPath[form=1,x=7.2,y=3.5,RA=1.2,rotation=0,prefix=y]{2}

    \Vertex[x=1.8,y=0,L=e_1^{z_1}]{e0}
    \Vertex[x=3,y=0,L=e_2^{z_1}]{e1}
    \Vertex[x=5.4,y=0,L=e_1^{z_2}]{e3}
    \Vertex[x=6.6,y=0,L=e_2^{z_2}]{e2}

    \draw[dotted,line width=.9] (e0) -- (x0);
    \draw[dotted,line width=.9] (e0) -- (y0);

    \draw[wavy] (e1) -- (x0);
    \draw[wavy] (e1) -- (y1);

    \draw[dashed,line width=.75] (e2) -- (x1);
    \draw[dashed,line width=.75] (e2) -- (y1);

    \draw[dashdotdotted,line width=.9] (e3) -- (y0);
    \draw[dashdotdotted,line width=.9] (e3) -- (x1);

    \draw[] (e0) -- (se2) node[midway,xshift=-1.1mm,yshift=.5mm] {$2$};
    \draw[] (e1) -- (se2) node[midway,xshift=-1.7mm,yshift=-3mm] {$2$};
    \draw[] (e0) -- (se1);
    \draw[] (e2) -- (se4) node[midway,xshift=2mm,yshift=-5mm] {$2$};
    \draw[] (e3) -- (se4) node[midway,xshift=.25mm,yshift=-4mm] {$2$};
    \draw[] (e3) -- (se3);

    \begin{pgfonlayer}{background}
      \begin{scope}[opacity=.8,transparency group]
        \highlight{11mm}{black!30, fill=black!30}{(x0.center) to [bend left = 0] (se4.center)  to (e2.center) to (e0.center) to (x0.center)}
      \end{scope}
      \begin{scope}[opacity=.8,transparency group]
        \highlight{8mm}{black!45, fill=black!45}{(e2.center) to (y1.center) to (se1.center) to (e0.center) to (e2.center)}
      \end{scope}
      \begin{pgfonlayer}{background}
      \end{pgfonlayer}
    \end{pgfonlayer}

    \node at (8.25,1.25){$H_2$};
    \node at (0,1.25){$H_1$};

  \end{tikzpicture}
  \caption{Illustration of the proof of Theorem~\ref{thm:simultan:sapAPXhard2}. The unlabeled edges are of weight 1.}\label{fig:simultan:3dimmatchingconstructionWAPX}
\end{figure}
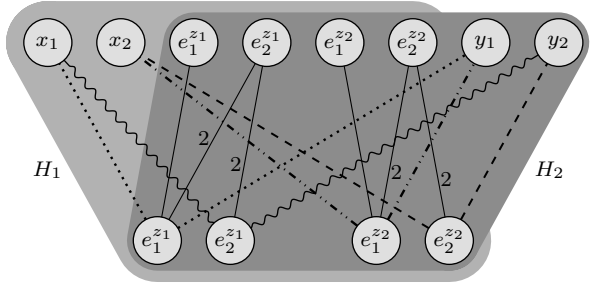

\subsection{The Case of Multiple Subgraphs}
It is quite natural to ask if there exists an $\alpha$-approximation algorithm for some constant $\alpha$ independent of the size of $\mc H$.
The next theorem shows that no such algorithm is possible, in fact, the approximation factor must grow asymptotically (essentially) linearly with the size of $\mc H$ unless P=NP.
\begin{theorem}
  The simultaneous assignment problem is $\Omega(|\mc H|^{1-\epsilon})$-inapproximable for all $\epsilon>0$ unless P=NP.
  The result holds even if $\mc L=\emptyset$.
\end{theorem}
\begin{proof}
  The \emph{maximum independent set problem} in a graph $G'=(V',E')$ is $\Omega(n^{1-\epsilon})$-inapprox\-imab\-le for all $\epsilon>0$~\cite{cliqueApxVeryHard}, where $n$ is the number of nodes in $V'$.
  We give an approximation-ratio preserving reduction from the maximum independent set problem to the simultaneous assignment problem.
  Let $G=(S,T;E)$ be a bipartite graph, where $S=V'$, $T=\{t\}$ and $E$ consists of all edges between $S$ and $T$.
  Let $\mc H = \{H_u : u\in V'\}$, where $H_u$ is the subgraph of $G$ induced by nodes $N_{G'}(u)\cup\{u,t\}$.
  Let $b_H\equiv 1$ for all $H\in\mc H$.
  For any subset $X\subseteq S$, $X$ can be covered by a simultaneous assignment in $G$ if and only if $X$ is independent in $G'$.
  To see this, observe that for any $u\in S$, if $u$ is covered by a simultaneous assignment, then no neighbor of $u$ may be covered because $b_{H_u}(t)=1$, hence the covered nodes form an independent set in $G'$.
  For the reverse direction, note that an independent set $X$ is covered by the simultaneous assignment $\chi_{\{st : s\in X\}}$.
  Since the size of $\mc H$ in the construction is exactly the number of nodes in $G'$, the simultaneous assignment problem is $\Omega(|\mc H|^{1-\epsilon})$-inapproximable for all $\epsilon>0$.
  \qed
\end{proof}

\subsection{Consequences to the $d$-Distance Matching Problem}\label{sec:simultan:hardness:conseqToDM}
In this section, we show that the hardness results above imply the inapproximability of both the weighted $d$-distance matching problem and the unweighted cyclic $d$-distance matching problem.
First, we prove that the weighted $d$-distance matching problem is hard to approximate.

\begin{theorem}\label{thm:simultan:wdmapxc}
  The weighted $d$-distance matching problem is NP-hard to approximate within any factor smaller than $\frac{760}{759}$, even if the maximum degree of the graph is at most $4$ and each occurring weight is either $1$ or $2$.
\end{theorem}
\begin{proof}
  Let us given an instance of the simultaneous assignment problem as per the statement of Theorem~\ref{thm:simultan:sapAPXhard2}, that is, let $G$ be bipartite with maximum degree at most four, let $\mc L=\emptyset$, $\mc H=\{H_1,H_2\}$, $b_{H_1}\equiv 1$, $b_{H_2}\equiv 1$ and assume that each edge weight is either $1$ or $2$, where $H_i=\{S_i,T_i;E_i\}$ for $i\in\{1,2\}$.
  We show an instance of the weighted $d$-distance matching problem such that there is a one-to-one, weight-preserving mapping between distance matchings and simultaneous assignments.
  Without loss of generality, one can assume that $H_1\not\subseteq H_2$ and $H_2\not\subseteq H_1$.
  To construct an instance $G'=(S',T';E'), d\in\N$ of the weighted $d$-distance matching problem, let $G'=G$ along with the weight of the edges and modify $G'$ as follows.
  Order the nodes of $S'$ such that the nodes of $S_1\setminus S_2$, $S_1\cap S_2$ and $S_2\setminus S_1$ appear in this order (the order of the nodes inside the three sets is arbitrary).
  Insert $|S_1\setminus S_2|$ and $|S_2\setminus S_1|$ new nodes to $S'$ right after the last node of $S_1$ and right after the last node not covered by $S_2$, respectively.
  Finally, set $d=|S|$.
  Figure~\ref{fig:simultan:dmNPCConstr} illustrates the construction for the instance shown in Figure~\ref{fig:simultan:3dimmatchingconstructionWAPX}.
  The blank nodes are the ones added in the last step.
  \begin{figure}
    \centering
    \begin{tikzpicture}[xscale=.8,yscale=.75]
      \SetVertexMath
      \tikzset{VertexStyle/.append style = {minimum size = 18pt,inner sep=0pt}}
      \grEmptyPath[form=1,x=-2,y=3.5,RA=1.2,rotation=0,prefix=x]{2}

      \Vertex[x=-2+1.2+1.2,y=3.5,L=$ $]{d1}
      \Vertex[x=-2+1.2+1.2+.8,y=3.5,L=$ $]{d2}

      \Vertex[x=7.2,y=3.5,L=$ $]{d1}
      \Vertex[x=7.2+.8,y=3.5,L=$ $]{d2}

      \Vertex[x=2.4,y=3.5,L=e_1^{z_1}]{se1}
      \Vertex[x=3.6,y=3.5,L=e_2^{z_1}]{se2}
      \Vertex[x=4.8,y=3.5,L=e_1^{z_2}]{se3}
      \Vertex[x=6,y=3.5,L=e_2^{z_2}]{se4}
      \grEmptyPath[form=1,x=7.7+1.5,y=3.5,RA=1.2,rotation=0,prefix=y]{2}

      \Vertex[x=1.6,y=0,L=e_1^{z_1}]{e0}
      \Vertex[x=3,y=0,L=e_2^{z_1}]{e1}
      \Vertex[x=5.4,y=0,L=e_1^{z_2}]{e3}
      \Vertex[x=6.6,y=0,L=e_2^{z_2}]{e2}

      \draw[dotted,line width=.9] (e0) -- (x0);
      \draw[dotted,line width=.9] (e0) -- (y0);

      \draw[wavy] (e1) -- (x0);
      \draw[wavy] (e1) -- (y1);

      \draw[dashed,line width=.75] (e2) -- (x1);
      \draw[dashed,line width=.75] (e2) -- (y1);

      \draw[dashdotdotted,line width=.9] (e3) -- (y0);
      \draw[dashdotdotted,line width=.9] (e3) -- (x1);

      \draw[] (e0) -- (se2) node[midway,xshift=2mm,yshift=6mm] {$2$};
      \draw[] (e1) -- (se2) node[midway,xshift=2.3mm,yshift=6mm] {$2$};
      \draw[] (e0) -- (se1);
      \draw[] (e2) -- (se4) node[midway,xshift=0mm,yshift=6mm] {$2$};
      \draw[] (e3) -- (se4) node[midway,xshift=0mm,yshift=6mm] {$2$};
      \draw[] (e3) -- (se3);

    \end{tikzpicture}
    \caption{Illustration of the construction given in the proof of Theorem~\ref{thm:simultan:wdmapxc} for the problem instance presented in Figure~\ref{fig:simultan:3dimmatchingconstructionWAPX}.
      There exists a perfect $9$-distance matching if and only if the problem given in Figure~\ref{fig:simultan:3dimmatchingconstructionWAPX} has a feasible solution of size 9.
      The unlabeled edges are of weight 1.}\label{fig:simultan:dmNPCConstr}
  \end{figure}
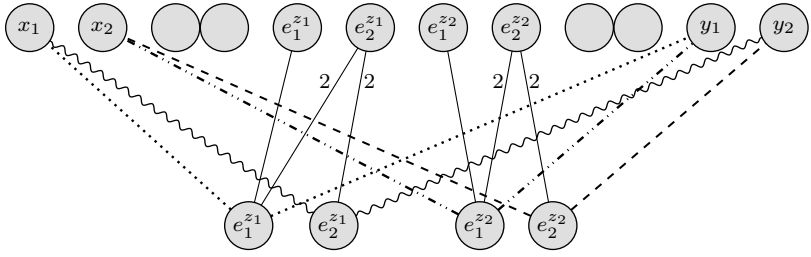
  Observe that there exists an $|S|$-distance matching in $G'$ if and only if there exists a simultaneous assignment $M\subseteq E$ with the same weight, which completes the proof.
  \qed
\end{proof}

As a corollary of Theorem~\ref{thm:simultan:sapAPXhard2}, we prove that the unweighted cyclic $d$-distance matching problem is hard to approximate better than $\frac{760}{759}$.

\begin{theorem}\label{thm:simultan:cdmApxHard}
  The unweighted cyclic distance matching problem is NP-hard to approximate within any factor smaller than $\frac{760}{759}$, even if the maximum degree of the graph is at most $5$.
\end{theorem}
\begin{proof}
  Let us given an instance of the weighted simultaneous assignment problem as per the statement of Theorem~\ref{thm:simultan:sapAPXhard2}, that is, let $G$ be bipartite with maximum degree at most four, $\mc L=\emptyset$, $\mc H=\{H_1,H_2\}$, $b_{H_1}\equiv 1$, $b_{H_2}\equiv 1$ and the weight of each edge is either $1$ or $2$, where $H_i=\{S_i,T_i;E_i\}$ for $i\in\{1,2\}$.

  Let $q$ denote the size of $S_1\setminus S_2$.
  Without loss of generality, assume that $q=|S_1\setminus S_2|=|S_2\setminus S_1|=\frac{|S_1\cap S_2|}{2}$.
  To construct an instance $G'=(S',T';E'),d\in\N$ of the unweighted cyclic $d$-distance matching problem, let $G'=G$ and modify $G'$ as follows.
  Drop the weight of the edges and order the nodes of $S'$ such that the nodes of $S_1\setminus S_2$, $S_1\cap S_2$ and $S_2\setminus S_1$ appear in this order (the order of the nodes inside the three sets is arbitrary).
  Insert a new copy of all nodes of $S_1\cap S_2$ after $S'$.
  These nodes correspond to the hyperedges of the 3-dimensional matching problem, hence let us denote the copies of $e_1^{z}, e_2^{z}$ by $\bar e_1^{z},\bar e_2^{z}$ for $z\in Z$.
  Add $2q$ new dummy nodes to $S'$ 1) after $S'$, 2) after the last node of $S_2$, 3) right after the last node of $S_1$, and 4) right after the last node not contained in $S_2$.
  Also add edges between $\bar e_2^z\in S'$ and $e_i^z\in T$ for all $z\in Z$ and $i\in\{1,2\}$.
  Finally, let $d=5q$.
  Figure~\ref{fig:simultan:cdmApcHard} illustrates the construction.
  The blank nodes are the dummy nodes.
  Note that the maximum degree of $G$ was 4, hence that of $G'$ is at most $5$.

  Let $M$ be a simultaneous assignment in $G$.
  We construct a cyclic $d$-distance matching $M'$ of $G'$ in polynomial time, whose size is the same as the weight of $M$.
  Observe that $M':=M$ is a feasible cyclic $d$-distance matching in $G'$.
  For $z\in Z$, if $M$ contains some edge $e_2^ze_i^z\in S\times T$ with weight $2$, then add edge $\bar e_2^ze_i^z\in S'\times T'$ to $M'$.
  This way, $M'$ remains feasible cyclic $d$-distance matching and the size of $M'$ is the same as the weight of $M$.
  Similarly, given a cyclic distance matching $M'$ in $G'$, one can easily construct a feasible simultaneous assignment $M$ in polynomial time whose weight is (at least) the size of $M'$.

  Hence if there exists an $\alpha$-approximation algorithm for the unweighted cyclic distance matching problem, then there exists an $\alpha$-approximation algorithm for the special case of the simultaneous assignment problem defined in Theorem~\ref{thm:simultan:sapAPXhard2}, which completes the proof.
  \begin{figure}[t]
    \centering

    \begin{tikzpicture}[yscale=.55]
      \SetVertexMath
      \tikzset{VertexStyle/.append style = {minimum size = 18pt,inner sep=0pt}}

      \Vertex[x=-2.8,y=-4,L=e_1^{z_1}]{e0}
      \Vertex[x=-1,y=-4,L=e_2^{z_1}]{e1}
      \Vertex[x=1,y=-4,L=e_1^{z_2}]{e3}
      \Vertex[x=2.8,y=-4,L=e_2^{z_2}]{e2}

      \pgfmathsetmacro\n{28}
      \foreach \i/\k in {
        0/x_1,
        1/x_2,
        2/ ,
        3/ ,
        4/ ,
        5/ ,
        6/e_1^{z_1},
        7/e_2^{z_1},
        8/e_1^{z_2},
        9/e_2^{z_2},
        10/ ,
        11/ ,
        12/ ,
        13/ ,
        14/y_1,
        15/y_2,
        16/ ,
        17/ ,
        18/ ,
        19/ ,
        20/\bar e_2^{z_2},
        21/\bar e_1^{z_2},
        22/\bar e_2^{z_1},
        23/\bar e_1^{z_1},
        24/ ,
        25/ ,
        26/ ,
        27/
      } {
        \pgfmathsetmacro\r{180-\i*(360/\n)+180/\n}
        \pgfmathsetmacro\bx{6*cos(\r}
        \ifthenelse{\i<15\AND \i>0}
        {

          \ifthenelse{\i>3\AND \i<12}
          {
            \pgfmathsetmacro\by{3.1}
          }
          {
            \pgfmathsetmacro\by{5*sin(\r)}
          }
        }
        {
          \ifthenelse{\i>19\AND \i<24}
          {
            \pgfmathsetmacro\by{-6.2}
          }
          {
            \pgfmathsetmacro\by{7*sin(\r)}
          }
        }
        \Vertex[x=\bx,y=\by,L=\k]{v\i};
      }

      \draw[dotted,line width=.9] (e0) -- (v0);
      \draw[dotted,line width=.9] (e0) -- (v14);

      \draw[wavy] (e1) -- (v0);
      \draw[wavy] (e1) -- (v15);

      \draw[dashed,line width=.75] (e2) -- (v1);
      \draw[dashed,line width=.75] (e2) -- (v15);

      \draw[dashdotdotted,line width=.9] (e3) -- (v14);
      \draw[dashdotdotted,line width=.9] (e3) -- (v1);

      \draw[] (e0) -- (v7);
      \draw[] (e1) -- (v7);
      \draw[] (e0) -- (v22);
      \draw[] (e1) -- (v22);
      \draw[] (e0) -- (v6);

      \draw[] (e2) -- (v9);
      \draw[] (e3) -- (v9);
      \draw[] (e2) -- (v20);
      \draw[] (e3) -- (v20);
      \draw[] (e3) -- (v8);

      \begin{pgfonlayer}{background}
        \highlight{2mm}{black!30}{(e0.center) -- (v0.center)}
        \highlight{2mm}{black!30}{(e0.center) -- (v14.center)}
        \highlight{2mm}{black!30}{(e2.center) -- (v1.center)}
        \highlight{2mm}{black!30}{(e2.center) -- (v15.center)}

        \highlight{2mm}{black!30}{(v7.center) -- (e1.center)}
        \highlight{2mm}{black!30}{(v9.center) -- (e3.center)}
      \end{pgfonlayer}
    \end{tikzpicture}
    \caption{Illustration of the construction in the proof of Theorem~\ref{thm:simultan:cdmApxHard} for the problem instance presented in Figure~\ref{fig:simultan:3dimMatchingConstruction}.
      The nodes of $S$ are the outer nodes.
      They are considered in cyclic order with $d=5|S_1\setminus S_2|=10$.
      The inner nodes are the nodes of $T$.}\label{fig:simultan:cdmApcHard}
  \end{figure}
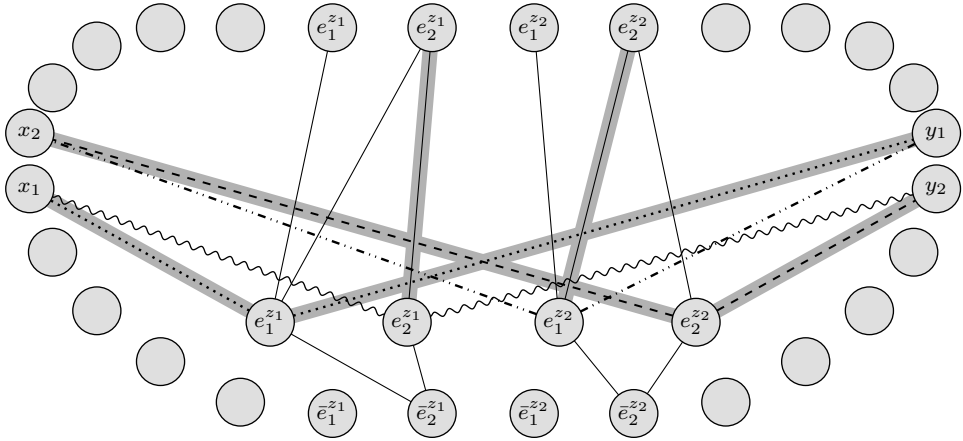
  \qed
\end{proof}

\section{Approximation Algorithms}\label{sec:simultan:apx}
Throughout this section, let $k=|\mc H|$, $\mc H=\{H_1,\dots,H_k\}$ and $H_i=(V_{i},E_{i})$ for $i\in\{1,\dots,k\}$.
Without loss of generality, assume that $\mc H\neq\emptyset$.
First, a general framework is given for deriving approximation algorithms for the simultaneous assignment problem, which will be utilized in the rest of the section in multiple settings.
The following definition plays a central role:

\begin{definition}\label{def:simultan:mlCover}
  Given an instance of the simultaneous assignment problem, we call $m$ not necessarily distinct subsets $F_1,\dots,F_m$ of the edges an \emph{$(m,\ell)$-cover} if every edge of $G$ is contained in at least $\ell$ of $F_1,\dots,F_m$.
\end{definition}

\begin{theorem}\label{thm:simultan:apxGapBound}
  Given a linear program whose integer solutions are exactly the feasible simultaneous assignments, let $F_1,\dots,F_m$ be an $(m,\ell)$-cover of $G=(V,E)$ such that the linear program becomes integer when restricted to $F_i$ for all $i\in\{1,\dots,m\}$.
  Then, the integrality gap of the linear program is at most $\frac{m}{\ell}$.
\end{theorem}
\begin{proof}
  Let $x$ be an optimal solution to the linear program and let $z$ be an optimal integer solution.
  Furthermore, let $x_i$ and $z_i$ denote an optimal fractional and integer solution to the problem restricted to $F_i$ for $i\in\{1,\dots,m\}$.
  Note that these solutions are also feasible solutions to the original problem.
  The following computation shows that the integrality gap is at most $\frac{m}{\ell}$.
  \begin{multline}\label{eq:simultan:apxMain:comp}
    \ell wx \leq  \sum_{i=1}^m\sum_{e\in F_i} w(e)x(e)\leq\sum_{i=1}^m\sum_{e\in F_i} w(e)x_i(e)\\
    \leq m\sum_{e\in F_{i^*}} w(e)x_{i^*}(e)=m\sum_{e\in F_{i^*}} w(e)z_{i^*}(e)\leq mwz,
  \end{multline}
  where $i^*=\argmax_{i\in\{1,\dots,m\}}\{\sum_{e\in F_i} w(e)x_i(e)\}$.
  The first inequality holds because every edge of $G$ is contained in at least $\ell$ of $F_1,\dots,F_m$, the second one follows by the optimality of $x_i$ for $F_i$, the third one by the selection of $i^*$, whereas the equation holds because the linear program is integer when the problem is restricted to $F_{i^*}$.
  By~(\ref{eq:simultan:apxMain:comp}), one gets that $\frac{wx}{wz}\leq\frac{m}{\ell }$, which completes the proof.
  \qed
\end{proof}

Theorem~\ref{thm:simultan:apxGapBound} gives a general framework for deriving bounds on the integrality gap of integer programs.
Note that the proof of Theorem~\ref{thm:simultan:apxGapBound}
can be turned into an approximation algorithm if
an $(m,\ell)$-cover $F_1,\dots,F_m$ is given and the linear program can be solved efficiently when the problem is restricted to $F_i$ for all $i\in\{1,\dots,m\}$ and $m$ is polynomial in the size of the problem.
In fact, one can avoid linear programming altogether and obtain an efficient $\frac{m}{\ell}$-approximation algorithm provided that the problems restricted to $F_i$ are tractable and the heaviest among them can be found in polynomial time.

\begin{theorem}\label{thm:simultan:apxComb}
  Let $F_1,\dots,F_m$ be an $(m,\ell)$-cover.
  Then, the heaviest among the optimal simultaneous assignments in the problems restricted to the edge set $F_i$ is an $\frac{m}{\ell}$-approximate solution for the original problem.
\end{theorem}
\begin{proof}
  Let $M_i$ denote an optimal solution to the problem restricted to $F_i$ and let $M^*$ be an optimal simultaneous assignment in $G$.
  Then,
  \begin{equation}
    \ell w(M^*)\leq\sum_{i=1}^m\sum_{e\in F_i\cap M^*}w(e)\leq\sum_{i=1}^m\sum_{e\in M_i}w(e)\leq m\sum_{e\in M_{i^*}}w(e)=mw(M_{i^*})
  \end{equation}
  holds, where $i^*=\argmax\{w(M_i) : i\in\{1,\dots,m\}\}$.
  This means that $\frac{w(M^*)}{w(M_{i^*})}\leq\frac{m}{\ell}$, that is,  $M_{i^*}$ is indeed $\frac{m}{\ell}$-approximate.
  Finally, observe that $M_{i^*}$ is a feasible solution to the original problem.
  \qed
\end{proof}

Theorem~\ref{thm:simultan:apxComb} gives a framework for deriving approximation algorithms for the simultaneous assignment problem.
Namely, we need to find an $(m,\ell)$-cover $F_1,\dots,F_m$ --- trying to minimize the ratio $\frac{m}{\ell}$ --- such that one can find the best among the optimal solutions to the problems restricted to $F_i$ for $i\in\{1,\dots,m\}$, which is an $\frac{m}{\ell}$-approximate solution by Theorem~\ref{thm:simultan:apxComb}.

In fact, this framework easily extends to problems other than the simultaneous assignment problem --- the main requirement is that any subset of a feasible solution should be feasible.

\subsection{Approximation Algorithm for Small $k$}\label{sec:simultan:apx:constAPX}

This section applies the approximation framework above to the simultaneous assignment problem in the case when the size of $\mc H$ is small.
Throughout this section, let $k=|\mc H|$ and let $k'\in\{1,\dots,k\}$ be the smallest integer for which every edge appears in at most $k'$ of the subgraphs in $\mc H$.
Consider the following type of $(m,\ell)$-covers.
\begin{definition}
  An $(m,\ell)$-cover $F_1,\dots,F_m$ is \emph{laminar} if the problem restricted to $F_i$ is laminar for each $i\in\{1,\dots,m\}$.
\end{definition}

In the light of Theorems~\ref{thm:simultan:apxGapBound}~and~\ref{thm:simultan:apxComb}, we want to construct a laminar $(m,\ell)$-cover minimizing the value $\frac{m}{\ell}$.
Let $\alpha(k,k')$ denote the minimum value of $\frac{m}{\ell}$ for which a laminar $(m',\ell')$-cover always exists such that $\frac{m'}{\ell'}\leq\frac{m}{\ell}$ whenever $k=|\mc H|$ and every edge appears in at most $k'$ subgraphs in $\mc H$.
In other words, $\alpha(k,k')$ is the best approximation ratio one can hope for by applying Theorem~\ref{thm:simultan:apxComb} to a laminar cover.
The following min-max theorem gives an easy-to-compute formula for $\alpha(k,k')$.

\begin{theorem}\label{thm:simultan:laminarCoverMain}
  Let $k$ and $k'$ be as above.
  The minimum value of $\frac{m}{\ell}$ for which there always exists a laminar $(m',\ell')$-cover such that $\frac{m'}{\ell'}\leq\frac{m}{\ell}$, that is, $\alpha(k,k')$, equals
  \begin{equation}\label{eq:simultan:alphaVal}
    \max_{j\in\{0,\dots,k'-1\}}\frac{1}{k-j}\sum_{i=j+1}^{k'}\binom{k}{i}.
  \end{equation}
  Furthermore, an $\alpha(k,k')$-approximate solution can be found in $\mathcal{O}(f(k)\poly(|V|,|E|))$ steps.
\end{theorem}
\begin{proof}
  If either $k=1$ or $k'=1$, then the problem is laminar, hence $E$ itself is a laminar $(1,1)$-cover.
  This means that $\alpha(k,1)=\alpha(1,k')=1$, which matches the value given by~(\ref{eq:simultan:alphaVal}).
  Therefore, one can assume that $k,k'\geq 2$.

  Let us partition the edge set based on which of the subgraphs in $\mc H$ they are included.
  Formally, let
  \begin{equation}\label{eq:simultan:catDef}
    C_I=\bigcap_{i\in I}E_i\cap\bigcap_{j\in\{1,\dots,k\}\setminus I}E\setminus E_j
  \end{equation}
  for all $I\subseteq\{1,\dots,k\}$.
  We will refer to $\mc C_I$ as an \emph{edge category} for each $I\subseteq\{1,\dots,k\}$.
  By definition, $C_I=\emptyset$ if $|I|>k'$.
  On the other hand, one can assume without loss of generality that $C_I\neq\emptyset$ when $|I|\leq k'$.
  Indeed, if there were such an empty category $C_I$, one could add two new nodes to $G$ and an edge $e$ of weight zero between them which falls into category $C_I$ --- clearly, a laminar $(m,\ell)$-cover of the resulting problem is also a laminar $(m,\ell)$-cover of the original problem if $e$ is left out and vice versa.
  The set of all non-empty edge categories will be denoted by
  \begin{equation}
    \mc C =\{C_I : I\subseteq\{1,\dots,k\}, C_I\neq\emptyset\}.
  \end{equation}
  We say that a subset $\mc F$ of $\mc C$ is a \emph{laminar category system} if the problem becomes laminar when restricted to the edges of the union of the categories in $\mc F$.
  Let $\mc L_{\mc C}$ denote the set of all \emph{maximal} laminar category systems.
  Let $\alpha'(k,k')$ denote the minimum value of $\frac{m}{\ell}$ for which there always exists a laminar $(m,\ell)$-cover $F'_1,\dots,F'_k$ such that \emph{each $F_i'$ is the union of the categories in a laminar category system}.
  \begin{lemma}\label{lem:simultan:categorysetsonly}
    $\alpha'(k,k')=\alpha(k,k')$.
  \end{lemma}
  \begin{proof}
    By definition, $\alpha'(k,k')\geq\alpha(k,k')$, since the set of laminar $(m,\ell)$-covers formed from laminar category systems is a subset of all possible laminar $(m,\ell)$-covers.
    To see the equation, let $\tilde G=(\tilde V,\tilde E)$ be a star graph with $2^k$ edges, where $\tilde E=\{e_0,\dots,e_{2^k-1}\}$.
    Let $\mc H=\{H_1,\dots,H_k\}$ be such that $\mc C$ consists of singletons, that is, each edge category consists of a single edge.
    One can easily construct such $\mc H$ as follows.
    Let $e_i\in H_j$ if and only if the $j^{\text{th}}$ bit of $i$ in base-2 is one.
    For example, $e_0$ is in no subgraphs, $e_1$ is in $H_1$, $e_2$ is in $H_2$, $e_3$ is both in $H_1$ and $H_2$, etc.
    Now, remove every edge which appears in more than $k'$ categories.
    For $\tilde G$, the minimum value $\frac{\tilde m}{\tilde \ell}$ for which a laminar $(\tilde m,\tilde \ell)$-cover $F_1,\dots,F_{\tilde m}$ exists does not change if we require that each of $F_1,\dots,F_{\tilde m}$ is the union of the categories in a laminar category system, as laminarity and locally laminarity mean the same for $\tilde G$.
    Notice that if a subset of categories is locally laminar for this instance, then the same selection of categories is also locally laminar for any other instance of the simultaneous assignment problem when no edge appears in more than $k'$ subgraphs in $\mc H$.
    This implies that $\alpha'(k,k')=\frac{\tilde m}{\tilde \ell}=\alpha(k,k')$, which was to be shown.
    \qed
  \end{proof}

  By Lemma~\ref{lem:simultan:categorysetsonly}, we can assume that our laminar $(m,\ell)$-cover consists of edge sets which are the unions of the categories in a laminar category system.
  In fact, we can restrict ourselves to \emph{maximal} laminar category systems.
  An optimal laminar $(m,\ell)$-cover using maximal laminar category systems can be represented as a vector $z\in\Z_+^{\mc L_{\mc C}}$ and an integer $\ell\in\N$ such that
  \begin{equation}\label{eq:simultan:laminarCoverInteger}
    \sum_{\mc F : C\in\mc F\in\mc L_{\mc C}}z_{\mtiny{\mc F}}\geq \ell
  \end{equation}
  holds for each category $C\in\mc C$ and
  \begin{equation}\label{eq:simultan:laminarCoverIntegerObjective}
    \frac{1}{\ell}\sum_{\mc F\in\mc L_{\mc C}}z_{\mtiny{\mc F}}
  \end{equation}
  is minimized, where $\mc L_{\mc C}$ denotes the set of all maximal laminar category systems, $z_{{\mtiny{\mc F}}}$ is the multiplicity of category system $\mc F\in\mc L_{\mc C}$ in the cover and $\ell$ is such that the cover defined by $z$ covers all categories (and hence the edges in them) at least $\ell$ times.
  To find such $z$ and $\ell$, it suffices to solve the following
  linear program.
  \begin{subequations}
    \begin{align}\label{lp:simultan:laminarCoverLp}
      \tag{LP7}
      \min&\sum_{\mc F\in\mc L_{\mc C}}y_{\mtiny{\mc F}}\\
      \mbox{s.t.}\quad\quad\quad\quad\quad&&\nonumber\\
      y&\in\Q_+^{\mc L_{\mc C}}&\\
      \sum_{\mc F : C\in \mc F\in\mc L_{\mc C}}y_{\mtiny{\mc F}} &\geq 1 &\forall C\in\mc C\label{eq:simultan:laminarCoverLP:ineq}
    \end{align}
  \end{subequations}
  Given an optimal solution $y\in\Q_+^{\mc L_{\mc C}}$ to (\ref{lp:simultan:laminarCoverLp}), one can construct a solution $z\in\Z_+^{\mc L_{\mc C}}$ to~(\ref{eq:simultan:laminarCoverInteger}) which minimizes~(\ref{eq:simultan:laminarCoverIntegerObjective}) as follows.
  Let $z:=\ell y$, where $\ell\in\N$ is the lowest common denominator of the coordinates of $y$.
  For this $z\in\Z_+^{\mc L_{\mc C}}$ and $\ell\in\N$, (\ref{eq:simultan:laminarCoverInteger}) holds, which means that $z$ corresponds to a laminar $(m,\ell)$-cover, where $m=\ell\sum_{\mc F\in\mc L_{\mc C}}y_{\mtiny{\mc F}}$, and hence $\alpha(k,k')=\sum_{\mc F\in\mc L_{\mc C}}y_{\mtiny{\mc F}}$ --- which was minimized in~(\ref{lp:simultan:laminarCoverLp}).

  Our aim is to construct an optimal solution to (\ref{lp:simultan:laminarCoverLp}).
  First, consider the following representation of maximal laminar category systems, which resembles the usual arborescence-representation of laminar families~\cite[Page 36]{AF11}.
  \begin{lemma}\label{lem:simultan:maxLaminCatSysRepr}
    There is a one-to-one correspondence between maximal laminar category systems of $\mc C$ and labeled trees on nodes $v_0,\dots,v_k$ in which the maximum distance from $v_0$ is at most $k'$.
  \end{lemma}
  \begin{proof}
    Let $\mc F\in\mc L_{\mc C}$ be a maximal laminar category system, and let $\mc I = \{I : C_I\in\mc F\}$.
    Let $E'_i$ denote those edges in $E_i$ which are contained in some of the categories in $\mc F$, where $i\in\{1,\dots,k\}$.
    We also define $E_0=C_\emptyset=E\setminus\bigcup E'_i$.
    By definition, the system formed by $E'_0,\dots,E'_k$ is laminar, meaning that $E'_i$ and $E'_j$ are either disjoint or one of them contains the other.
    The former case holds if and only if there exists no $I\in\mc I$ such that $\{i,j\}\subseteq I$.
    In the latter case, when one of  $E'_i$ and $E'_j$ contains the other, one can assume without loss of generality that $E'_i\subseteq E'_j$.
    By~(\ref{eq:simultan:catDef}), this holds if and only if there exists no $I\in\mc I$ such that $i\in I$ and $j\notin I$.

    Let $\mc I_i=\{I\in\mc I : i\in I\}$ for $i\in\{1,\dots,k\}$, and let $\mc I_0=\mc I$.
    By the properties above, $\{ \mc I_i : i\in\{0,\dots,k\}\}$ is a laminar system on ground set $\mc I$.
    We define a tree $T$ representing this system (and hence $\mc F$) on nodes $\{v_0,\dots,v_k\}$ as follows.
    Let $v_i$ be associated with $\mc I_i$ for $i\in\{0,\dots,k\}$.
    Connect $v_i$ and $v_j$ if $\mc I_j$ is maximal inside $\mc I_i$.
    Figure~\ref{fig:simultan:treeReprOfMaxLamSystems} illustrates the construction for $k=k'=3$.
    It represents all maximal laminar category systems as Venn-diagrams and also shows their tree representations.
    Note that the maximum distance from $v_0$ is at most $k'$ since every $I\in\mc I$ appears in at most $k'$ of $\mc I_1,\dots,\mc I_k$.
    \qed
  \end{proof}
  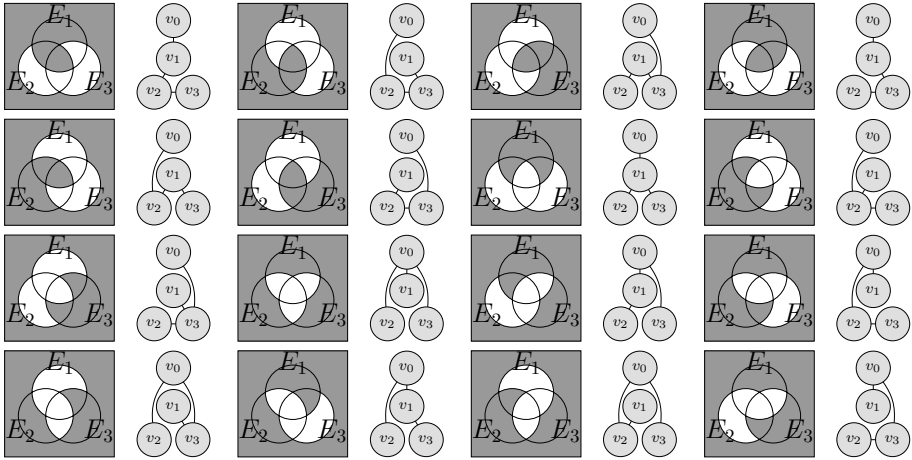
\begin{figure}[!h]              
    \centering
    \scalebox{.71}{
      \begin{tabular}{c c c c c c c c c c c}
        \begin{tikzpicture}[x=1.7em,y=1.7em]
          \pic at (0,0) {venn 3/.expanded=10010011};
        \end{tikzpicture}
        \hspace{-2mm}
        & \begin{tikzpicture}[x=1.2em,y=1.2em]
          \SetVertexMath
          dot/.style = {circle, draw, semithick,
            inner sep=0pt, minimum size=1pt,
            node contents={}},
          \tikzset{edge/.style = {->,> = latex'}}
          \Vertex[x=0, y=2 + sqrt 3,L={v_0}]{v_0}
          \Vertex[x=0, y=sqrt 3,    L={v_1}]{v_1}
          \Vertex[x=-1,y=0,         L={v_2}]{v_2}
          \Vertex[x=1, y=0,         L={v_3}]{v_3}
          \draw[] (v_0) -- (v_1);
          \draw[] (v_1) -- (v_2);
          \draw[] (v_2) -- (v_3);
        \end{tikzpicture}
        & \begin{tikzpicture}[x=1.7em,y=1.7em]
          \pic at (0,0) {venn 3/.expanded=10100101};
        \end{tikzpicture}
          \hspace{-2mm}
        & \begin{tikzpicture}[x=1.2em,y=1.2em]
          \SetVertexMath
          dot/.style = {circle, draw, semithick,
            inner sep=0pt, minimum size=1pt,
            node contents={}},
          \tikzset{edge/.style = {->,> = latex'}}
          \Vertex[x=0, y=2 + sqrt 3,L={v_0}]{v_0}
          \Vertex[x=0, y=sqrt 3,    L={v_1}]{v_1}
          \Vertex[x=-1,y=0,         L={v_2}]{v_2}
          \Vertex[x=1, y=0,         L={v_3}]{v_3}
          \draw[] (v_0) to[bend right = 20pt] (v_2);
          \draw[] (v_2) -- (v_3);
          \draw[] (v_3) -- (v_1);
        \end{tikzpicture}
        & \begin{tikzpicture}[x=1.7em,y=1.7em]
          \pic at (0,0) {venn 3/.expanded=11001001};
        \end{tikzpicture}
          \hspace{-2mm}
        & \begin{tikzpicture}[x=1.2em,y=1.2em]
          \SetVertexMath
          dot/.style = {circle, draw, semithick,
            inner sep=0pt, minimum size=1pt,
            node contents={}},
          \tikzset{edge/.style = {->,> = latex'}}
          \Vertex[x=0, y=2 + sqrt 3,L={v_0}]{v_0}
          \Vertex[x=0, y=sqrt 3,    L={v_1}]{v_1}
          \Vertex[x=-1,y=0,         L={v_2}]{v_2}
          \Vertex[x=1, y=0,         L={v_3}]{v_3}
          \draw[] (v_0) to[bend left = 20pt] (v_3);
          \draw[] (v_3) -- (v_1);
          \draw[] (v_1) -- (v_2);
        \end{tikzpicture}
        & \begin{tikzpicture}[x=1.7em,y=1.7em]
          \pic at (0,0) {venn 3/.expanded=10011001};
        \end{tikzpicture}
          \hspace{-2mm}
        & \begin{tikzpicture}[x=1.2em,y=1.2em]
          \SetVertexMath
          dot/.style = {circle, draw, semithick,
            inner sep=0pt, minimum size=1pt,
            node contents={}},
          \tikzset{edge/.style = {->,> = latex'}}
          \Vertex[x=0, y=2 + sqrt 3,L={v_0}]{v_0}
          \Vertex[x=0, y=sqrt 3,    L={v_1}]{v_1}
          \Vertex[x=-1,y=0,         L={v_2}]{v_2}
          \Vertex[x=1, y=0,         L={v_3}]{v_3}
          \draw[] (v_0) -- (v_1);
          \draw[] (v_1) -- (v_3);
          \draw[] (v_3) -- (v_2);
        \end{tikzpicture}
        \\
        \begin{tikzpicture}[x=1.7em,y=1.7em]
          \pic at (0,0) {venn 3/.expanded=10100011};
        \end{tikzpicture}
        \hspace{-2mm}
        & \begin{tikzpicture}[x=1.2em,y=1.2em]
          \SetVertexMath
          dot/.style = {circle, draw, semithick,
            inner sep=0pt, minimum size=1pt,
            node contents={}},
          \tikzset{edge/.style = {->,> = latex'}}
          \Vertex[x=0, y=2 + sqrt 3,L={v_0}]{v_0}
          \Vertex[x=0, y=sqrt 3,    L={v_1}]{v_1}
          \Vertex[x=-1,y=0,         L={v_2}]{v_2}
          \Vertex[x=1, y=0,         L={v_3}]{v_3}
          \draw[] (v_0) to[bend right = 20pt] (v_2);
          \draw[] (v_2) -- (v_1);
          \draw[] (v_1) -- (v_3);
        \end{tikzpicture}
        & \begin{tikzpicture}[x=1.7em,y=1.7em]
          \pic at (0,0) {venn 3/.expanded=11000101};
        \end{tikzpicture}
          \hspace{-2mm}
        & \begin{tikzpicture}[x=1.2em,y=1.2em]
          \SetVertexMath
          dot/.style = {circle, draw, semithick,
            inner sep=0pt, minimum size=1pt,
            node contents={}},
          \tikzset{edge/.style = {->,> = latex'}}
          \Vertex[x=0, y=2 + sqrt 3,L={v_0}]{v_0}
          \Vertex[x=0, y=sqrt 3,    L={v_1}]{v_1}
          \Vertex[x=-1,y=0,         L={v_2}]{v_2}
          \Vertex[x=1, y=0,         L={v_3}]{v_3}
          \draw[] (v_0) to[bend left = 20pt] (v_3);
          \draw[] (v_3) -- (v_2);
          \draw[] (v_2) -- (v_1);
        \end{tikzpicture}
        & \begin{tikzpicture}[x=1.7em,y=1.7em]
          \pic at (0,0) {venn 3/.expanded=10011010};
        \end{tikzpicture}
          \hspace{-2mm}
        & \begin{tikzpicture}[x=1.2em,y=1.2em]
          \SetVertexMath
          dot/.style = {circle, draw, semithick,
            inner sep=0pt, minimum size=1pt,
            node contents={}},
          \tikzset{edge/.style = {->,> = latex'}}
          \Vertex[x=0, y=2 + sqrt 3,L={v_0}]{v_0}
          \Vertex[x=0, y=sqrt 3,    L={v_1}]{v_1}
          \Vertex[x=-1,y=0,         L={v_2}]{v_2}
          \Vertex[x=1, y=0,         L={v_3}]{v_3}
          \draw[] (v_0) -- (v_1);
          \draw[] (v_1) -- (v_2);
          \draw[] (v_1) -- (v_3);
        \end{tikzpicture}
        & \begin{tikzpicture}[x=1.7em,y=1.7em]
          \pic at (0,0) {venn 3/.expanded=10100110};
        \end{tikzpicture}
          \hspace{-2mm}
        & \begin{tikzpicture}[x=1.2em,y=1.2em]
          \SetVertexMath
          dot/.style = {circle, draw, semithick,
            inner sep=0pt, minimum size=1pt,
            node contents={}},
          \tikzset{edge/.style = {->,> = latex'}}
          \Vertex[x=0, y=2 + sqrt 3,L={v_0}]{v_0}
          \Vertex[x=0, y=sqrt 3,    L={v_1}]{v_1}
          \Vertex[x=-1,y=0,         L={v_2}]{v_2}
          \Vertex[x=1, y=0,         L={v_3}]{v_3}
          \draw[] (v_0) to[bend right = 20pt] (v_2);
          \draw[] (v_2) -- (v_1);
          \draw[] (v_2) -- (v_3);
        \end{tikzpicture}
        \\
        \begin{tikzpicture}[x=1.7em,y=1.7em]
          \pic at (0,0) {venn 3/.expanded=11001100};
        \end{tikzpicture}
        \hspace{-2mm}
        & \begin{tikzpicture}[x=1.2em,y=1.2em]
          \SetVertexMath
          dot/.style = {circle, draw, semithick,
            inner sep=0pt, minimum size=1pt,
            node contents={}},
          \tikzset{edge/.style = {->,> = latex'}}
          \Vertex[x=0, y=2 + sqrt 3,L={v_0}]{v_0}
          \Vertex[x=0, y=sqrt 3,    L={v_1}]{v_1}
          \Vertex[x=-1,y=0,         L={v_2}]{v_2}
          \Vertex[x=1, y=0,         L={v_3}]{v_3}
          \draw[] (v_0) to[bend left = 20pt] (v_3);
          \draw[] (v_3) -- (v_1);
          \draw[] (v_3) -- (v_2);
        \end{tikzpicture}
        & \begin{tikzpicture}[x=1.7em,y=1.7em]
          \pic at (0,0) {venn 3/.expanded=11110000};
        \end{tikzpicture}
          \hspace{-2mm}
        & \begin{tikzpicture}[x=1.2em,y=1.2em]
          \SetVertexMath
          dot/.style = {circle, draw, semithick,
            inner sep=0pt, minimum size=1pt,
            node contents={}},
          \tikzset{edge/.style = {->,> = latex'}}
          \Vertex[x=0, y=2 + sqrt 3,L={v_0}]{v_0}
          \Vertex[x=0, y=sqrt 3,    L={v_1}]{v_1}
          \Vertex[x=-1,y=0,         L={v_2}]{v_2}
          \Vertex[x=1, y=0,         L={v_3}]{v_3}
          \draw[] (v_0) -- (v_1);
          \draw[] (v_0) to[bend left = 20pt] (v_3);
          \draw[] (v_0) to[bend right = 20pt] (v_2);
        \end{tikzpicture}
        & \begin{tikzpicture}[x=1.7em,y=1.7em]
          \pic at (0,0) {venn 3/.expanded=11010010};
        \end{tikzpicture}
          \hspace{-2mm}
        & \begin{tikzpicture}[x=1.2em,y=1.2em]
          \SetVertexMath
          dot/.style = {circle, draw, semithick,
            inner sep=0pt, minimum size=1pt,
            node contents={}},
          \tikzset{edge/.style = {->,> = latex'}}
          \Vertex[x=0, y=2 + sqrt 3,L={v_0}]{v_0}
          \Vertex[x=0, y=sqrt 3,    L={v_1}]{v_1}
          \Vertex[x=-1,y=0,         L={v_2}]{v_2}
          \Vertex[x=1, y=0,         L={v_3}]{v_3}
          \draw[] (v_0) -- (v_1);
          \draw[] (v_0) to[bend left = 20pt] (v_3);
          \draw[] (v_1) -- (v_2);
        \end{tikzpicture}
        & \begin{tikzpicture}[x=1.7em,y=1.7em]
          \pic at (0,0) {venn 3/.expanded=10110100};
        \end{tikzpicture}
          \hspace{-2mm}
        & \begin{tikzpicture}[x=1.2em,y=1.2em]
          \SetVertexMath
          dot/.style = {circle, draw, semithick,
            inner sep=0pt, minimum size=1pt,
            node contents={}},
          \tikzset{edge/.style = {->,> = latex'}}
          \Vertex[x=0, y=2 + sqrt 3,L={v_0}]{v_0}
          \Vertex[x=0, y=sqrt 3,    L={v_1}]{v_1}
          \Vertex[x=-1,y=0,         L={v_2}]{v_2}
          \Vertex[x=1, y=0,         L={v_3}]{v_3}
          \draw[] (v_0) -- (v_1);
          \draw[] (v_0) to[bend right = 20pt] (v_2);
          \draw[] (v_2) -- (v_3);
        \end{tikzpicture}
        \\
        \begin{tikzpicture}[x=1.7em,y=1.7em]
          \pic at (0,0) {venn 3/.expanded=11101000};
        \end{tikzpicture}
        \hspace{-2mm}
        & \begin{tikzpicture}[x=1.2em,y=1.2em]
          \SetVertexMath
          dot/.style = {circle, draw, semithick,
            inner sep=0pt, minimum size=1pt,
            node contents={}},
          \tikzset{edge/.style = {->,> = latex'}}
          \Vertex[x=0, y=2 + sqrt 3,L={v_0}]{v_0}
          \Vertex[x=0, y=sqrt 3,    L={v_1}]{v_1}
          \Vertex[x=-1,y=0,         L={v_2}]{v_2}
          \Vertex[x=1, y=0,         L={v_3}]{v_3}
          \draw[] (v_0) to[bend right = 20pt] (v_2);
          \draw[] (v_0) to[bend left = 20pt] (v_3);
          \draw[] (v_3) -- (v_1);
        \end{tikzpicture}
        &\begin{tikzpicture}[x=1.7em,y=1.7em]
          \pic at (0,0) {venn 3/.expanded=10111000};
        \end{tikzpicture}
          \hspace{-2mm}
        & \begin{tikzpicture}[x=1.2em,y=1.2em]
          \SetVertexMath
          dot/.style = {circle, draw, semithick,
            inner sep=0pt, minimum size=1pt,
            node contents={}},
          \tikzset{edge/.style = {->,> = latex'}}
          \Vertex[x=0, y=2 + sqrt 3,L={v_0}]{v_0}
          \Vertex[x=0, y=sqrt 3,    L={v_1}]{v_1}
          \Vertex[x=-1,y=0,         L={v_2}]{v_2}
          \Vertex[x=1, y=0,         L={v_3}]{v_3}
          \draw[] (v_0) -- (v_1);
          \draw[] (v_0) to[bend right = 20pt] (v_2);
          \draw[] (v_1) -- (v_3);
        \end{tikzpicture}
        & \begin{tikzpicture}[x=1.7em,y=1.7em]
          \pic at (0,0) {venn 3/.expanded=11100010};
        \end{tikzpicture}
          \hspace{-2mm}
        & \begin{tikzpicture}[x=1.2em,y=1.2em]
          \SetVertexMath
          dot/.style = {circle, draw, semithick,
            inner sep=0pt, minimum size=1pt,
            node contents={}},
          \tikzset{edge/.style = {->,> = latex'}}
          \Vertex[x=0, y=2 + sqrt 3,L={v_0}]{v_0}
          \Vertex[x=0, y=sqrt 3,    L={v_1}]{v_1}
          \Vertex[x=-1,y=0,         L={v_2}]{v_2}
          \Vertex[x=1, y=0,         L={v_3}]{v_3}
          \draw[] (v_0) to[bend right = 20pt] (v_2);
          \draw[] (v_0) to[bend left = 20pt] (v_3);
          \draw[] (v_2) -- (v_1);
        \end{tikzpicture}
        & \begin{tikzpicture}[x=1.7em,y=1.7em]
          \pic at (0,0) {venn 3/.expanded=11010100};
        \end{tikzpicture}
          \hspace{-2mm}
        & \begin{tikzpicture}[x=1.2em,y=1.2em]
          \SetVertexMath
          dot/.style = {circle, draw, semithick,
            inner sep=0pt, minimum size=1pt,
            node contents={}},
          \tikzset{edge/.style = {->,> = latex'}}
          \Vertex[x=0, y=2 + sqrt 3,L={v_0}]{v_0}
          \Vertex[x=0, y=sqrt 3,    L={v_1}]{v_1}
          \Vertex[x=-1,y=0,         L={v_2}]{v_2}
          \Vertex[x=1, y=0,         L={v_3}]{v_3}
          \draw[] (v_0) -- (v_1);
          \draw[] (v_0) to[bend left = 20pt] (v_3);
          \draw[] (v_3) -- (v_2);
        \end{tikzpicture}

      \end{tabular}
    }
    \caption{Tree representation of all maximal laminar category systems for $k=k'=3$.}\label{fig:simultan:treeReprOfMaxLamSystems}
  \end{figure}
  \begin{remark}\label{rem:simultan:repr}
    In the correspondence between maximal laminar category systems and trees on nodes $v_0,\dots,v_k$ described in the proof of Lemma~\ref{lem:simultan:maxLaminCatSysRepr}, a node $v$ of a tree $T$ represents a category $I\in\mc C$ whose size is the distance between $v_0$ and $v$, furthermore, the represented category only depends on the path between $v_0$ and $v$.
  \end{remark}

  In the light of Lemma~\ref{lem:simultan:maxLaminCatSysRepr}, one may think of maximal laminar category systems as labeled trees on $(k+1)$ nodes with (rooted) depth of at most $k'$.
  For $j\in\{1,\dots,k'\}$, let $\mc T_j$ be the set of labeled trees on node set $v_0,\dots,v_k$ each of which is the union of a path $v_0,\dots,u$ of length $(j-1)$ and every other node is connected to $u$.
  Note that by Lemma~\ref{lem:simultan:maxLaminCatSysRepr}, these trees correspond to maximal laminar category systems --- for $k=k'=3$, the first six diagrams in Figure~\ref{fig:simultan:treeReprOfMaxLamSystems} (read row-by-row) correspond to $\mc T_3$, the next three to $\mc T_2$ and the next one to $\mc T_1$.
  In what follows, we construct a laminar category cover that consists of maximal laminar category systems represented by the trees in $\mc T_1,\dots,\mc T_{k'}$, and we show that it is an optimal solution to~(\ref{lp:simultan:laminarCoverLp}).
  To this end, the next lemma counts the vertices of these trees which represent a given category $C$.
  \begin{lemma}\label{lem:simultan:laminCoveringNumber}
    The laminar category systems represented by the trees in $\mc T_j$ contain every category in $\mc C_i$ exactly $a_{ij}$ times, where
    \begin{equation}\label{eq:simultan:aijDef}
      a_{ij}=\begin{cases}
        \frac{(k-i)!i!}{(k-j+1)!} & \text{if } i<j,\\
        j! & \text{if } i=j,\\
        0& \text{otherwise}\\
      \end{cases}
    \end{equation}
    for each $i,j\in\{1,\dots,k'\}$.
  \end{lemma}
  \begin{proof}
    By Remark~\ref{rem:simultan:repr} and by symmetry, every category in $\mc C_i$ is represented exactly the same number of times in $\mc T_j$.
    Therefore, to derive the value of $a_{ij}$, it suffices to count the nodes of the trees in $\mc T_j$ which represent a category in $\mc C_i$ and divide this number by the cardinality of $\mc C_i$.

    Case 1.
    Suppose that $i<j$.
    Then, every tree in $\mc T_j$ contains exactly one node at distance $i$, and hence each tree represents a laminar system containing exactly one category from $\mc C_i$.
    For every path of length $(j-1)$ on $(k+1)$ nodes with one of its endpoints fixed, exactly one tree has been added to $\mc T_j$, therefore $|\mc T_j|=\frac{k!}{(k-j+1)!}$.
    From this, one gets that
    \begin{equation*}
      a_{ij}=\frac{|\mc T_j|}{|\mc C_i|}=\frac{\frac{k!}{(k-j+1)!}}{\binom{k}{i}}=\frac{(k-i)!i!}{(k-j+1)!},
    \end{equation*}
    which is identical to the first case of~(\ref{eq:simultan:aijDef}).

    Case 2.
    Suppose that $i=j$.
    Then, every tree of $\mc T_j$ contains exactly $(k-j+1)$ nodes at distance $i$ from $v_0$, and hence each tree represents exactly $(k-j+1)$ categories from $\mc C_i$.
    Therefore,
    \begin{equation*}
      a_{ij}=\frac{(k-j+1)|\mc T_j|}{|\mc C_i|}=\frac{(k-j+1)(k-j)!j!}{(k-j+1)!}=j!,
    \end{equation*}
    which had to be shown.

    Case 3.
    Suppose $i>j$.
    The depth of all trees in $\mc T_j$ being $j$, they contain no nodes at distance $i$ from $v_0$.
    Hence none of the trees in $\mc T_j$ represents any categories from $\mc C_i$, therefore $a_{ij}=0$.
    This completes the proof of the lemma.
    \qed
  \end{proof}

  To define a laminar cover, assume that every laminar system represented by $\mc T_j$ is assigned the same coefficient $x_j$ in our solution for each $j\in\{1,\dots,k'\}$.
  That is, a solution to the following program will be given.
  \begin{subequations}
    \begin{align}\label{lp:simultan:laminarCoverLpPerType}
      \tag{LP8}
      \min&\sum_{j=1}^{k'}\frac{k!}{(k-j+1)!}x_j\\
      \mbox{s.t.}\quad\quad\quad\quad\quad&&\nonumber\\
      x&\in\Q_+^{k'}&\\
      Ax &\geq 1, &\label{eq:simultan:laminarCoverLpPerType:mtxIneq}
    \end{align}
  \end{subequations}
  where $A=(a_{ij})\in\Z^{k'\times k'}$ for the $a_{ij}$'s defined in Lemma~\ref{lem:simultan:laminCoveringNumber}, and $x_j$ is the common coefficient of all trees in $\mc T_j$ for $j\in\{1,\dots,k'\}$.
  Note that a solution $x$ to~(\ref{lp:simultan:laminarCoverLpPerType}) can be converted to a solution $y$ of~(\ref{lp:simultan:laminarCoverLp}) by setting coefficient $y_{\mtiny{\mc F}}$ of a laminar category system $\mc F\in\mc L_{\mc C}$ to $x_j$ if $\mc F$ is represented by one of the trees in $\mc T_j$ and to zero if no such $j$ exists (see the proof of Lemma~\ref{lem:simultan:optimalLaminarCover} for the exact argument).
  Now, we give a solution $\tilde x^+$ to~(\ref{lp:simultan:laminarCoverLpPerType}) and then show that the corresponding $y$ is an optimal solution to~(\ref{lp:simultan:laminarCoverLp}).
  \begin{lemma}
    Let
    \begin{equation}\label{eq:simultan:requSolToLES}
      \tilde x_j=
      \begin{cases}
        \frac{1}{k'!} & \text{if } j=k',\\
        (k-j-1)\tilde x_{j+1}+\frac{2j-k+1}{(j+1)!} & \text{otherwise}
      \end{cases}
    \end{equation}
    and
    \begin{equation}
      \tilde x^+_j=\max(\tilde x_j,0),
    \end{equation}
    where $j\in\{1,\dots,k'\}$.
    Then, $\tilde x^+$ is a feasible solution to~(\ref{lp:simultan:laminarCoverLpPerType}).
  \end{lemma}
  \begin{proof}
    Let $A=(a_{ij})\in\Z^{k'\times k'}$ be as above.
    First, we show that $A\tilde x=1$.
    The $i^{\text{th}}$ row of this linear equation system is
    \begin{equation}\label{eq:simultan:mtxMultiplicationDef}
      \sum_{j=1}^{k'}a_{ij}\tilde x_j=1.
    \end{equation}
    By substitution and simple rearrangement,~(\ref{eq:simultan:mtxMultiplicationDef}) is equivalent with
    \begin{equation}\label{eq:simultan:mtxMultiplicationRearranged}
      \tilde x_j=\frac{1-\sum_{q=j+1}^{k'}\frac{(k-j)!j!}{(k-q+1)!}\tilde x_q}{j!}
    \end{equation}
    for all $j\in\{1,\dots,k'\}$.
    We prove (\ref{eq:simultan:mtxMultiplicationRearranged}) by induction.
    For $j=k'$, (\ref{eq:simultan:mtxMultiplicationRearranged}) gives that $\tilde x_{k'}=\frac{1}{k'!}$, which matches the first case of~(\ref{eq:simultan:requSolToLES}).
    For the inductive step, assume that $j<k'$ and that (\ref{eq:simultan:mtxMultiplicationRearranged}) holds for $(j+1)$.
    Consider the following computation:
    \begin{multline}
      \tilde x_j
      =(k-j-1)\tilde x_{j+1}+\frac{2j-k+1}{(j+1)!}
      =\frac{j!(k-j-1)\tilde x_{j+1}+\frac{2j-k+1}{j+1}}{j!}\\
      =\frac{-j!\tilde x_{j+1}+\frac{2j-k+1}{j+1}+\frac{k-j}{j+1}(j+1)!\tilde x_{j+1}}{j!}\\
      =\frac{-j!\tilde x_{j+1}+\frac{2j-k+1}{j+1}+\frac{k-j}{j+1}(1-\sum_{q=j+2}^{k'}\frac{(k-j-1)!(j+1)!}{(k-q+1)!}\tilde x_q)}{j!}\\
      =\frac{1-j!\tilde x_{j+1}-\frac{k-j}{j+1}\sum_{q=j+2}^{k'}\frac{(k-j-1)!(j+1)!}{(k-q+1)!}\tilde x_q}{j!}\\
      =\frac{1-\frac{(k-j)!j!}{(k-(j+1)+1)!}\tilde x_{j+1}-\sum_{q=j+2}^{k'}\frac{(k-j)!j!}{(k-q+1)!}\tilde x_q}{j!}\\
      =\frac{1-\sum_{q=j+1}^{k'}\frac{(k-j)!j!}{(k-q+1)!}\tilde x_q}{j!},
    \end{multline}
    where the fourth and the last equation hold by induction.
    From this, $A\tilde x=1$ immediately follows.

    To complete the proof of the lemma, observe that $A\tilde x^+\geq 1$ holds, because $a_{ij}\geq 1$ and $A\tilde x=1$.
    Since $\tilde x^+$ is also non-negative, one gets that $\tilde x^+$ is a feasible solution to~(\ref{lp:simultan:laminarCoverLpPerType}).
    \qed
  \end{proof}

  We need the following property of the negative coordinates of $\tilde x$ before we further investigate $\tilde x^+$.
  \begin{lemma}\label{lem:simultan:negCoordsOfTheSolToLaminarCoverLpPerType}
    Let $\tilde x$ be as defined in~(\ref{eq:simultan:requSolToLES}).
    Then, for $j\in\{1,\dots,k'-1\}$, if $\tilde x_{j+1}<0$, then $\tilde x_{j}<0$.
  \end{lemma}
  \begin{proof}
    First we show that $\tilde x_{j+1}<0$ implies that
    \begin{equation}\label{eq:simultan:negCoordsOfTheSolToLaminarCoverLpPerType:indexIneq}
      j+1<\left\lceil\frac{k-1}{2}\right\rceil.
    \end{equation}
    If $k'<\left\lceil\frac{k-1}{2}\right\rceil$, then~(\ref{eq:simultan:negCoordsOfTheSolToLaminarCoverLpPerType:indexIneq}) follows, so one can assume that $k'\geq \left\lceil\frac{k-1}{2}\right\rceil$.
    It suffices to show that
    \begin{equation}\label{eq:simultan:negCoordsOfTheSolToLaminarCoverLpPerType:ineqSeq}
      0\leq \tilde x_{k'}\leq\dots\leq \tilde x_{\left\lceil\frac{k-1}{2}\right\rceil}
    \end{equation}
    holds when $j+1\geq\left\lceil\frac{k-1}{2}\right\rceil$, that is, if~(\ref{eq:simultan:negCoordsOfTheSolToLaminarCoverLpPerType:indexIneq}) does not hold.
    If $k'=\left\lceil\frac{k-1}{2}\right\rceil$, then (\ref{eq:simultan:negCoordsOfTheSolToLaminarCoverLpPerType:ineqSeq})~follows, since $\tilde x_{k'}=\frac{1}{k'!}\geq 0$.
    For $k'>\left\lceil\frac{k-1}{2}\right\rceil$, one gets that $\tilde x_{k'}\leq \tilde x_{k'-1}$, since
    \[\tilde x_{k'-1}=(k-k')\tilde x_{k'}+\frac{2k'-k-1}{k'!}=\frac{k'-1}{k'!}\geq\frac{1}{k'!}=\tilde x_{k'}\]
    holds when $k'\geq 2$.
    The next computation shows that $\tilde x_{q+1}\leq \tilde x_{q}$ for $\left\lceil\frac{k-1}{2}\right\rceil\leq q\leq k'-2$.
    \[\tilde x_{q}=(k-q-1)\tilde x_{q+1}+\frac{2q-k+1}{(q+1)!}\geq \tilde x_{q+1}+\frac{2q-k+1}{(q+1)!}\geq \tilde x_{q+1},\]
    where the first and the second inequality follows from $q\leq k'-2$ and $\left\lceil\frac{k-1}{2}\right\rceil\leq q$, respectively.
    This completes the proof of~(\ref{eq:simultan:negCoordsOfTheSolToLaminarCoverLpPerType:ineqSeq}), which in turn implies~(\ref{eq:simultan:negCoordsOfTheSolToLaminarCoverLpPerType:indexIneq}).

    Using~(\ref{eq:simultan:negCoordsOfTheSolToLaminarCoverLpPerType:indexIneq}), we show that $\tilde x_{j}<\tilde x_{j+1}$, which completes the proof of the lemma.
    \[\tilde x_{j}=(k-j-1)\tilde x_{j+1}+\frac{2j-k+1}{(j+1)!}<\tilde x_{j+1}+\frac{2j-k+1}{(j+1)!}\leq\tilde x_{j+1},\] where the first and the last inequality holds by $k-j-1\geq 2$ and $2j-k+1\leq 0$, respectively --- both of which follow by~(\ref{eq:simultan:negCoordsOfTheSolToLaminarCoverLpPerType:indexIneq}).
    \qed
  \end{proof}

  Now, an optimal solution $\tilde y$ to~(\ref{lp:simultan:laminarCoverLp}) will be constructed.
  \begin{lemma}\label{lem:simultan:optimalLaminarCover}
    For $\mc F\in\mc L_{\mc C}$, let
    \begin{equation}
      \tilde y_{\mtiny{\mc F}}=
      \begin{cases}
        \tilde x^+_j & \text{if $\mc T_j$ contains the tree representation of $\mc F$},\\
        0 & \text{otherwise.}
      \end{cases}
    \end{equation}
    Then, $\tilde y$ is an optimal solution to~(\ref{lp:simultan:laminarCoverLp}) and its objective value equals~(\ref{eq:simultan:alphaVal}).
  \end{lemma}
  \begin{proof}
    Note that $\tilde y_{\mtiny{\mc F}}$ is well-defined, since each laminar category system ${\mc F}$ is represented by a unique tree $T$ and any tree is contained in at most one of $\mc T_1,\dots,\mc T_k$.

    To see that $\tilde y$ is a feasible solution to~(\ref{lp:simultan:laminarCoverLp}), recall that $a_{ij}$ is the number of times the trees in $\mc T_j$ cover a category with an index of size $i$.
    Therefore,~(\ref{eq:simultan:laminarCoverLpPerType:mtxIneq}) immediately implies~(\ref{eq:simultan:laminarCoverLP:ineq}) for $C_I\in\mc C\setminus\{C_\emptyset\}$.
    To show that~(\ref{eq:simultan:laminarCoverLP:ineq}) also holds for $C_\emptyset$, observe that $C_\emptyset$ is contained in any maximal laminar category system, which implies that~(\ref{eq:simultan:laminarCoverLpPerType:mtxIneq}) holds for $C_\emptyset$ (since $k\geq 1$), but then~(\ref{eq:simultan:laminarCoverLP:ineq}) holds for $C_\emptyset$ as well.
    Hence $\tilde y$ is indeed a feasible solution to~(\ref{lp:simultan:laminarCoverLp}).
    To prove the optimality of $\tilde y$, we need the dual of~(\ref{lp:simultan:laminarCoverLp}):
    \begin{subequations}
      \begin{align}\label{lp:simultan:laminarCoverLpDual}
        \tag{LP7D}
        \max&\sum_{C\in\mc C}\pi_C\\
        \mbox{s.t.}\quad\quad\quad\quad\quad&&\nonumber\\
        \pi&\in\Q_+^{\mc C}&\\
        \sum_{C\in \mc F}\pi_C &\leq 1 &\forall \mc F\in\mc L_{\mc C}\label{eq:simultan:laminarCoverLPDual:ineq}
      \end{align}
    \end{subequations}

    By Lemma~\ref{lem:simultan:negCoordsOfTheSolToLaminarCoverLpPerType}, if $\tilde x_j<0$ for some $j\in\{1,\dots,k\}$, then $\tilde x_{j'}<0$ for all $j'\in\{1,\dots,j\}$.
    Let $t$ denote the largest index for which $\tilde x_t<0$, and let $t=0$ if no such index exists.
    Note that $t\in\{0,\dots,k'-1\}$.
    We show that $\tilde\pi\in\Q_+^{\mc C}$ is a feasible solution to~(\ref{lp:simultan:laminarCoverLpDual}), where
    \begin{equation}\label{eq:simultan:piDef}
      \tilde\pi_{C_I}=
      \begin{cases}
        \frac{1}{k-t} & \text{if } |I|>t,\\
        0 & \text{otherwise}\\
      \end{cases}
    \end{equation}
    for $C_I\in\mc C$. Let ${\mc F}$ be an arbitrary maximal laminar category system.
    By the tree representation, it is clear that there are at least $(t+1)$ categories in ${\mc F}$ with an index of size at most $t$, which means that ${\mc F}$ contains at most $k+1-(t+1)=k-t$ categories with an index of size larger than $t$.
    This implies that $\tilde\pi\in\Q_+^{\mc C}$ satisfies constraints~(\ref{eq:simultan:laminarCoverLPDual:ineq}), hence it is dual-feasible.

    In fact, $\tilde\pi$ satisfies all constraints in~(\ref{eq:simultan:laminarCoverLPDual:ineq}) with equality that correspond to a positive $\tilde y_{\mtiny{\mc F}}$ coordinate (these correspond to those laminar systems whose tree representation appear is some $\mc T_j$), since then ${\mc F}$ contains exactly $k+1-(t+1)=k-t$ categories with an index set of size more than $t$ and hence $\sum_{C\in {\mc F}}\tilde\pi_C = (k+1-(t+1))\frac{1}{k-t} = 1$.

    Observe that $\tilde x_j<0$ if and only if $(A\tilde x^+)_j>1$.
    Hence $\tilde x^+$ fulfills all constraints of~(\ref{eq:simultan:laminarCoverLpPerType:mtxIneq}) with equality except for the first $t$.
    Recall that $x_j=y_{\mtiny{\mc F}}$ for those laminar category systems ${\mc F}$ which are represented by one of the trees in $\mc T_j$, and $a_{ij}$ is the number of times the trees in $\mc T_j$ cover each category with an index of size $i$.
    This means that $\tilde y_{\mtiny{\mc F}}$ fulfills an inequality of~(\ref{eq:simultan:laminarCoverLP:ineq}) with equality if and only if the inequality corresponds to a category with an index of size larger than $t$.
    But, by~(\ref{eq:simultan:piDef}), a coordinate of $\tilde\pi_C$ is positive if and only if it corresponds to such a category.

    Therefore, $\tilde y$ and $\tilde\pi$ are complementary solutions, which implies that $\tilde y$ is indeed an optimal solution.
    It remains to show that the objective value of $\tilde y$ matches~(\ref{eq:simultan:alphaVal}).
    Let $g(j)=\frac{1}{k-j}\sum_{i=j+1}^{k'}\binom{k}{i}$ for $j\in\{0,\dots,k'-1\}$.
    Using this notation, we have to show that
    \begin{equation*}
      \sum_{\mc F\in\mc L_{\mc C}}\tilde y_{\mtiny{\mc F}}=\max_{j\in\{0,\dots,k'-1\}}g(j).
    \end{equation*}
    By the Duality theorem,
    \begin{equation}\label{eq:simultan:optCoverAndFij}
      \sum_{\mc F\in\mc L_{\mc C}}\tilde y_{\mtiny{\mc F}}=\sum_{C\in\mc C}\tilde\pi_C=\frac{1}{k-t}\sum_{i=t+1}^{k'}\binom{k}{i}=g(t),
    \end{equation}
    where $\tilde\pi$ is as defined in~(\ref{eq:simultan:piDef}).
    From this, one gets that $\max_{j\in\{0,\dots,k'-1\}}g(j)\geq g(t)=\sum_{\mc F\in\mc L_{\mc C}}\tilde y_{\mtiny{\mc F}}$.
    To prove equality, we show that $g(j)\leq g(t)$ holds for all $j\in\{0,\dots,k'-1\}$ by defining a feasible dual solution whose objective value is $g(j)$.
    Let $\tilde\pi'\in\Q_+^{\mc C}$ be such that
    \begin{equation*}
      \tilde\pi'_{C_I}=
      \begin{cases}
        \frac{1}{k-j} & \text{if } |I|>j,\\
        0 & \text{otherwise}\\
      \end{cases}
    \end{equation*}
    for $C_I\in\mc C$.
    Since $\tilde\pi'$ is a feasible dual solution, its objective value $\sum_{\mc F\in\mc L_{\mc C}}\tilde\pi'_C$ is at most that of the optimal dual solution $\tilde\pi$ --- the latter equals to $g(t)$ by~(\ref{eq:simultan:optCoverAndFij}).
    Hence
    \begin{equation*}
      g(t)\geq\sum_{\mc F\in\mc L_{\mc C}}\tilde\pi'_C=\frac{1}{k-j}\sum_{i=j+1}^{k'}\binom{k}{i}=g(j)
    \end{equation*}
    follows.
    Therefore, $\max_{j\in\{0,\dots,k'-1\}}g(j)=g(t)=\sum_{\mc F\in\mc L_{\mc C}}\tilde y_{\mtiny{\mc F}}$, which had to be shown.
    \qed
\end{proof}

  As noted above, $\tilde y$ corresponds to an $(m,\ell)$-cover $F_1,\dots,F_m$ for which $\alpha(k,k')=\frac{m}{\ell}$.
  To complete the proof of Theorem~\ref{thm:simultan:laminarCoverMain}, observe that all maximal laminar category systems can be enumerated in $\mathcal{O}(f(k))$ time, furthermore, one can find an optimal solution to the problem restricted to any of them in polynomial time by Theorem~\ref{thm:simultan:locLamCaseSolvable}.
  The heaviests among these solutions are at least as good as the bests among the optimal solutions to the problems restricted to $F_i$ for $i\in\{1,\dots,m\}$ --- which are $\alpha(k,k')$-approximate by Theorem~\ref{thm:simultan:apxComb}.
  \qed
\end{proof}

\begin{remark}
  The proof of Lemma~\ref{lem:simultan:categorysetsonly} also shows that the value of $\alpha(k,k')$ derived in Theorem~\ref{thm:simultan:laminarCoverMain} would remain the same even if we considered \emph{locally}-laminar $(m,\ell)$-covers, which are less restrictive than laminar $(m,\ell)$-covers.
\end{remark}

Table~\ref{tbl:simultan:apxRatios} summarizes the value of $\alpha(k,k')$ given by Theorem~\ref{thm:simultan:laminarCoverMain} for small values of $k$ and $k'$.
By Theorems~\ref{thm:simultan:apxComb}~and~\ref{thm:simultan:laminarCoverMain} we get the following.
\setlength{\extrarowheight}{2pt}
\begin{table}
  \centering
  \begin{tabularx}{4.9cm}{|Y|Y|Y|Y|Y|Y|}
    \hline
    \diagbox[innerleftsep=3pt,innerrightsep=2pt,width=.8cm,height=.8cm]{$k$}{$k'$}
    & $1$ & $2$ & $3$ & $4$ & $5$ \\ \hline
    $1$ & \cellcolor{gray!50}$1$& & & & \\ \hline
    $2$ & \cellcolor{gray!50}$1$& \cellcolor{gray!50}$\!\sfrac{3}{2}$& & & \\ \hline
    $3$ & \cellcolor{gray!50}$1$& $2$& $\!\sfrac{7}{3}$& & \\ \hline
    $4$ & \cellcolor{gray!50}$1$& $\!\sfrac{5}{2}$& $\!\sfrac{7}{2}$& $\!\!\sfrac{15}{4}$ & \\ \hline
    $5$ & \cellcolor{gray!50}$1$& $3$& $5$& $\!\!\sfrac{25}{4}$ & $\!\!\sfrac{13}{2}$ \\ \hline
  \end{tabularx}
  \caption{The approximation guarantees given by Theorem~\ref{thm:simultan:laminarCoverMain}, where $k=|\mc H|$ and every edge is in at most $k'$ subgraphs in $\mc H$.
    The highlighted values match the integrality gap of~(\ref{lp:simultan:degLpStrong}).}\label{tbl:simultan:apxRatios}
\end{table}
\begin{theorem}\label{thm:simultan:lamCombApx}
  One can find an $\alpha(k,k')$-approximate solution to the simultaneous assignment problem in $\mathcal{O}(f(k)\poly(|V|,|E|))$ time, where $k=|\mc H|$ and every edge appears in at most $k'$ of the subgraphs in $\mc H$.
\end{theorem}
For small $k$, this approximation guarantee is significantly better than that of the greedy algorithm, which is $(2k+1)$.

By Theorem~\ref{thm:simultan:locallyLaminarPolytope}, (\ref{lp:simultan:degLpStrong}) becomes integer when the problem is restricted to a locally laminar edge set, hence applying Theorems~\ref{thm:simultan:apxGapBound}~and~\ref{thm:simultan:laminarCoverMain}, one gets the following.
\begin{theorem}\label{thm:simultan:lamApxMain}
  The integrality gap of (\ref{lp:simultan:degLpStrong}) is at most $\alpha(k,k')$, where $k=|\mc H|$ and every edge appears in at most $k'$ of the subgraphs in $\mc H$.
\end{theorem}

We show that the integrality gap of~(\ref{lp:simultan:degLpStrong}) is exactly $\alpha(k,k')$ for $k\in\{1,2\}$.
For $k=1$, the integrality gap of~(\ref{lp:simultan:degLpStrong}) is one by Theorem~\ref{thm:simultan:locallyLaminarPolytope}.
Figure~\ref{fig:simultan:gapExamplek2} shows an example for $k=2$, where $x\equiv\frac{1}{2}$ is a feasible solution with objective value $\frac{3}{2}$, while the optimal integer solutions consist of a single edge.
It remains open if this bound is tight for any $k\geq 3$.

\begin{remark}
  The value of $\alpha(k,k')$ does not change if we do not require that $\mc L$ is laminar.
  Instead, it is enough to ensure that $\mc L$ becomes laminar when restricted to $F_i$, where $F_1,\dots,F_m$ is the laminar cover given by Theorem~\ref{thm:simultan:laminarCoverMain}.
\end{remark}

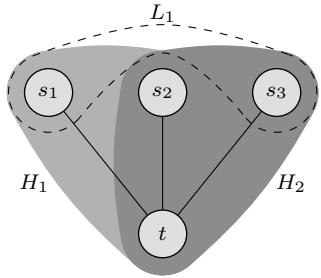
\begin{figure}
  \centering
  \begin{tikzpicture}[scale=.75]
    \SetVertexMath
    dot/.style = {circle, draw, semithick,
      inner sep=0pt, minimum size=3pt,
      node contents={}},
    \tikzset{edge/.style = {->,> = latex'}}
    \Vertex[x=-2,y=3.5,L=s_1]{s1}
    \Vertex[x=0,y=3.5,L=s_2]{s2}
    \Vertex[x=2,y=3.5,L=s_{3}]{s3}

    \Vertex[empty,x=0,y=4.7]{du}
    \Vertex[empty,x=0,y=4.2]{dd}

    \Vertex[x=0,y=1,L=t]{t}

    \draw[] (t) -- (s1);
    \draw[] (t) -- (s2);
    \draw[] (t) -- (s3);

    \begin{pgfonlayer}{background}
      \begin{scope}[opacity=.8,transparency group]
        \highlight{11mm}{black!30}{(s1.center) to [bend right =10] (t.center) to [bend right =10] (s2.center) to [bend right = 10] (s1.center)}
        \highlight{11mm}{black!30}{(s1.center) to [bend left =10] (t.center) to [bend left =10] (s2.center) to [bend left = 10] (s1.center)}
      \end{scope}
      \begin{scope}[opacity=.8,transparency group]
        \highlight{11mm}{black!45}{(s2.center) to [bend right =10] (t.center) to [bend right =10] (s3.center) to [bend right = 10] (s2.center)}
        \highlight{11mm}{black!45}{(s2.center) to [bend left =10] (t.center) to [bend left =10] (s3.center) to [bend left = 10] (s2.center)}
      \end{scope}
      \begin{scope}
        \draw[dashed, dash phase=1.4pt] (s1) ++(-230-22.5:20pt)coordinate(s11)  arc (-230-22.5:-40:20pt)   coordinate(s12);
        \draw[dashed, dash phase=.0pt] (s3) ++(-230+90:20pt)coordinate(s31)  arc (-230+90:-40+90+22.5:20pt)   coordinate(s32);
        \draw[dashed, dash phase=1.6pt] plot[smooth, tension=.7] coordinates {(s11) (du) (s32)};
        \draw[dashed, dash phase=1.6pt] plot[smooth, tension=.7] coordinates {(s12) (dd) (s31)};
      \end{scope}

    \end{pgfonlayer}
    \node at (-2.25,1.9){$H_1$};
    \node at (+2.25,1.9){$H_2$};
    \node at (0,4.9){$L_1$};
  \end{tikzpicture}
  \caption{Tight integrality gap example for $k=2$.
    Here the two subgraphs $H_1$ and $H_2$ in $\mc H$ are represented by the light and dark highlights, while the only set $L_1=\{s_1,s_3\}$ of $\mc L$ corresponds to the dashed area.
    We set $b\equiv 1$ and $c\equiv 1$.}\label{fig:simultan:gapExamplek2}
\end{figure}

\subsection{Improved Approximation Algorithm for Uniform $b$}\label{sec:simultan:apx:uniformBounds}
This section investigates the case of \emph{uniform} $b$, that is, when there exists $a\in\Z_+$ such that $b_{H}\equiv a$ for all $H\in\mc H$.
The following approach is yet another application of the approximation framework given at the beginning of this section.
\begin{theorem}
  If $b$ is uniform, then the integrality gap of~(\ref{lp:simultan:degLpStrong})~is at most $\frac{k+1}{2}$, and a $\frac{k+1}{2}$-approximate solution can be found in strongly polynomial time.
\end{theorem}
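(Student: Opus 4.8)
The plan is to establish both conclusions at once through the covering machinery of Theorem~\ref{thm:apxMain} and Theorem~\ref{thm:combApx}, but applied to a class of restrictions that is strictly larger than the laminar ones and that becomes tractable precisely because $b$ is uniform. The first thing to note is that the laminar covers of Theorem~\ref{thm:laminarCoverMain} only ever reach the ratio $\alpha(k,k')$, which exceeds $\frac{k+1}{2}$ as soon as $k\ge 3$; since $\alpha(k,k')$ is insensitive to the actual values of $b$, uniformity cannot improve a laminar cover and must instead be used to enlarge the class of $(LP1^{*})$-integral restrictions. The conceptual heart of the argument is therefore a lemma identifying this enlarged class: I would try to show that, when $b$ is uniform, the problem restricted to an edge set $F$ is still described by the projected blossom inequalities under a hypothesis weaker than local laminarity, the intuition being that at a node two overlapping degree constraints carrying the same bound can be replaced, at the level of the bidirected formulation~(\ref{lp:degLpExtension}), by their nested (laminar) closure without altering the integer hull.

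With such a class in hand, the second step is to exhibit an $(m,l)$-cover $F_1,\dots,F_m$ of ratio $\frac{m}{l}=\frac{k+1}{2}$ each of whose pieces lies in it. The natural target is $l=2$ and $m=k+1$: every edge should lie in at least two pieces, and each piece should induce a uniform-$b$ tractable, hence $(LP1^{*})$-integral, subproblem. I would set up the analogue of the covering linear program~(\ref{lp:laminarCoverLp}) for this enlarged class and analyse it through its dual exactly as in the proof of Theorem~\ref{thm:laminarCoverMain}, the point being that the richer admissible pieces drive the optimum of the covering LP down from $\alpha(k,k')$ to $\frac{k+1}{2}$. Once the cover is constructed, Theorem~\ref{thm:apxMain} gives the integrality-gap bound, while Theorem~\ref{thm:combApx}, combined with the strongly polynomial solvability of each tractable piece (cf.\ Theorem~\ref{thm:locLamCaseSolvable}), yields the $\frac{k+1}{2}$-approximation algorithm.

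As an independent route to the gap bound, and to explain the provenance of the factor, I would use that $(LP1^{*})$ contains the projected blossom inequalities $\mathcal{B}(E_i)$ for each single subgraph $E_i$, so that an optimal fractional solution $x^{*}$ lies in the $b$-matching polytope of every $H_i$. Viewing $x^{*}$ as a scaled fractional common $b$-matching and invoking a Shannon/Vizing-type edge-colouring of its supporting multigraph, which colours a multigraph of maximum degree $\Delta$ with at most $\frac{3}{2}\Delta$ matchings, is exactly where a factor tied to $\frac{1}{2}$ enters for a single subgraph; distributing the $k$ degree families across the colour classes should then accumulate to $\frac{k+1}{2}$. Uniformity of $b$ is essential here because it makes the per-node budget identical in every subgraph, so that one colouring can be charged against all $k$ degree families simultaneously.

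The main obstacle is the first step: proving that uniform $b$ genuinely enlarges the $(LP1^{*})$-integral restrictions beyond the locally laminar ones. The bidirected reduction of Section~\ref{sec:laminarNonBP} used local laminarity only to keep each variable in at most two equality constraints, and equal bounds do not by themselves restore this sparsity when subgraphs overlap, so the delicate claim is that replacing overlapping equal-bound degree constraints by their laminar closure neither removes integer points nor creates new fractional vertices. In the colouring route, the analogous difficulty is maintaining the cross-subgraph consistency of a shared edge while holding the colour count at $\frac{k+1}{2}b$. Finally I would verify that the covering LP has optimum exactly $\frac{k+1}{2}$ and match it against the tight instance of Figure~\ref{fig:gapExamplek2} for $k=2$, confirming that the bound cannot be improved.
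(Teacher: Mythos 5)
Your high-level framework is the right one --- use the uniformity of $b$ to obtain $(LP1^*)$-integral restrictions beyond the locally laminar ones, assemble them into a $(k+1,2)$-cover, and invoke Theorems~\ref{thm:apxMain} and~\ref{thm:combApx} --- but the proposal never supplies the two ingredients this framework needs, and the lemma you place at its ``conceptual heart'' is false in the generality you state it. The paper's cover is completely explicit and requires no covering LP: take $F_i$ to be the set of edges lying in $E_i$ or in \emph{no} subgraph of $\mc H$ (for $i=1,\dots,k$), together with $F_0$, the set of edges lying in \emph{at most one} subgraph. An edge in no subgraph lies in all $k+1$ sets, an edge in exactly one subgraph $H_i$ lies in $F_i$ and in $F_0$, and an edge in two or more subgraphs lies in the corresponding $F_i$'s; hence this is a $(k+1,2)$-cover. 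Tractability of each piece needs no statement about \emph{crossing} equal-bound constraints: on $F_i$, every edge of another subgraph $H_j$ is automatically also an edge of $H_i$, so $\Delta_{H_j}(v)\cap F_i\subseteq\Delta_{H_i}(v)\cap F_i$ at every node $v$, and since $b$ is uniform the $H_j$-constraint is a partial sum of the $H_i$-constraint with the same right-hand side, hence redundant. Deleting the redundant subgraphs leaves a single-subgraph (so locally laminar) instance, and Theorems~\ref{thm:locLamCaseSolvable} and~\ref{thm:locallyLaminarPolytope} give integrality of $(LP1^*)$ and strongly polynomial solvability; on $F_0$ the restriction is locally laminar outright, because the sets $\Delta_{H_i}(v)\cap F_0$ are pairwise disjoint.

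By contrast, your proposed key lemma --- that overlapping degree constraints with equal bounds can be replaced by a ``laminar closure'' without changing the integer hull --- fails already for $A=\{e_1,e_2\}$, $B=\{e_2,e_3\}$ with bound $1$ and unit capacities: the feasible integer set contains $(1,0,1)$ but excludes $(1,1,0)$, $(0,1,1)$ and $(1,1,1)$, and no laminar family on $\{e_1,e_2,e_3\}$ with any bounds reproduces exactly this set (a laminar set forbidding $(1,1,0)$ and a laminar set forbidding $(0,1,1)$ would have to cross, and replacing either by $\{e_1,e_2,e_3\}$ with bound $1$ kills $(1,0,1)$). Crossing constraints are precisely what the paper's choice of pieces avoids --- on each piece the constraints at a node are nested --- so your ``main obstacle'' is not something to be overcome but something to be sidestepped by the right cover, which your plan never constructs; the dual analysis of a new covering LP and the Shannon/Vizing colouring sketch remain speculative and are not needed. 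One smaller correction: $\alpha(k,2)=\frac{k+1}{2}$ for every $k$ (compute~(\ref{eq:alphaVal}) at $j=0$, or see Table~\ref{tbl:apxRatios}), so laminar covers already attain $\frac{k+1}{2}$ when $k'=2$; the uniform-$b$ theorem is a genuine improvement over $\alpha(k,k')$ only when $k'\ge 3$, not ``as soon as $k\ge 3$''.
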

\begin{proof}
  Let $x_i$ denote an optimal solution to the problem restricted to those edges which are in either $H_i\in\mc H$ or non of the subgraphs in $\mc H$, and let $z$ denote an optimal solution to the problem restricted to those edges of $G$ which are included in at most one of the subgraphs in $\mc H$.
  Firstly, we show that~(\ref{lp:simultan:degLpStrong})~is integer for the problem restricted to $H_i$ and also that $x_i$ can be found in strongly polynomial time.
  In the problem restricted to $H_i$, all degree constraints posed in other subgraphs in $\mc H$ are redundant, hence we can delete all subgraphs other than $H_i$.
  The size of $\mc H$ being one, we conclude that $x_i$ can be found in strongly polynomial time and~(\ref{lp:simultan:degLpStrong})~is integer for the restricted problem by Theorems~\ref{thm:simultan:locLamCaseSolvable}~and~\ref{thm:simultan:locallyLaminarPolytope}.
  Secondly, if one restricts the problem to the edges contained in at most one of the subgraphs in $\mc H$, then the problem is again laminar.
  Therefore, $z$ can be found in strongly polynomial time and~(\ref{lp:simultan:degLpStrong})~is integer when the problem is restricted to these edges.

  Furthermore, observe that $z,x_1,\dots,x_k$ is a $(k+1,2)$-cover of $E$.
  By Theorem~\ref{thm:simultan:apxGapBound}, this implies that the integrality gap of~(\ref{lp:simultan:degLpStrong}) is at most $\frac{k+1}{2}$ and, by Theorem~\ref{thm:simultan:apxComb}, the heaviest among $z,x_1,\dots,x_k$ is a $\frac{k+1}{2}$-approximate solution, which completes the proof of the theorem.
  \qed
\end{proof}

As a special case, this gives a $\frac{3}{2}$-approximation algorithm for the weighted double matching problem.

\subsection{When the Graph is Sparse}\label{sec:simultan:apx:sparse}
Throughout this section, we assume that $\mc L=\emptyset$.
We say that a graph $G=(V,E)$ is \emph{$\frac{m}{\ell}$-sparse} if $\ell i_G(X)\leq m(|X|-1)$ holds for every $\emptyset\neq X\subseteq V$, where $i_G(X)$ denotes the number of edges induced by node set $X\subseteq V$.
The following classical result of Nash-Williams characterizes $\frac{m}{1}$-sparse graphs.
\begin{theorem}[Nash-Williams~\cite{NW64}]\label{thm:simultan:NashWilliams}
  The edge set of a graph $G=(V,E)$ can be covered by $m$ forests if and only if $i_G(X)\leq m(|X|-1) $ for every $\emptyset\neq X\subseteq V$.
\end{theorem}

From Theorem~\ref{thm:simultan:NashWilliams}, we derive the following characterization of $\frac{m}{\ell}$-sparse graphs.
\begin{corollary}\label{cor:simultan:multiCoverWithForests}
  There exists an $(m,\ell)$-cover $F_1,\dots,F_m\subseteq E$ consisting of forests if and only if $G$ is $\frac{m}{\ell}$-sparse, that is, $\ell i_G(X)\leq m(|X|-1)$ for every $\emptyset\neq X\subseteq V$.
\end{corollary}
\begin{proof}
  By definition, forests $F_1,\dots,F_m\subseteq E$ form an $(m,\ell)$-cover if every edge $e\in E$ is contained in at least $\ell$ of $F_1,\dots,F_m$.
  Without loss of generality, one can assume that every edge is contained in exactly $\ell$ of $F_1,\dots,F_m$.
  To reduce the problem to Theorem~\ref{thm:simultan:NashWilliams}, we replace every edge $e\in E$ with $\ell$ parallel edges.
  Let $G'=(V,E')$ denote the graph obtained this way.
  There exists an $(m,\ell)$-cover $F_1,\dots,F_m\subseteq E$ in $G$ which consists of forests if and only if there exist $m$ disjoint forests in $G'$ covering $E'$.
  By Theorem~\ref{thm:simultan:NashWilliams}, the latter holds if and only if $i_{G'}(X)\leq m(|X|-1)$ for every $\emptyset\neq X\subseteq V$.
  Since $i_{G'}(X)=\ell i_G(X)$ by the construction of $G'$, there exists an $(m,\ell)$-cover $F_1,\dots,F_m\subseteq E$ in $G$ if and only if $\ell i_G(X)\leq m(|X|-1)$.
  The latter means that $G$ is $\frac{m}{\ell}$-sparse, which completes the proof.
  \qed
\end{proof}

Using this characterization of the existence of $(m,\ell)$-covers consisting of forests, we derive several approximation algorithms for $\frac{m}{\ell}$ sparse graphs.

\begin{theorem}\label{thm:simultan:sparseGraphGapAndApx}
  If $G$ is $\frac{m}{\ell}$-sparse, $\mc L=\emptyset$ and $\mc H$ has the local-interval property, then the integrality gap of~(\ref{lp:simultan:degLp}) is at most $\frac{m}{\ell}$, and an $\frac{m}{\ell}$-approximate solution can be found in strongly polynomial time.
\end{theorem}
\begin{proof}
  By Corollary~\ref{cor:simultan:multiCoverWithForests}, if $\ell i_G(X)\leq m(|X|-1)$ for every $\emptyset\neq X\subseteq V$, then there exists an $(m,\ell)$ cover $F_1,\dots,F_m\subseteq E$ consisting of forests.
  By Theorem~\ref{thm:simultan:treeInterval}, if the problem is restricted to $F_i$, then the polyhedron defined by~(\ref{lp:simultan:degLp}) is integer for all $i\in\{1,\dots,m\}$.
  But then Theorem~\ref{thm:simultan:apxGapBound} implies that the integrality gap of~(\ref{lp:simultan:degLp}) is at most $\frac{m}{\ell}$.

  By Theorem~\ref{thm:simultan:treeInterval}, one can efficiently solve the simultaneous assignment problem for trees when $\mc H$ has the local-interval property.
  Therefore, Theorem~\ref{thm:simultan:apxComb} immediately implies that an $\frac{m}{\ell}$-approximate solution can be found in strongly polynomial time.
  \qed
\end{proof}

\begin{corollary}
  If $G$ is a planar graph with minimum cycle length $m$, $\mc L=\emptyset$ and $\mc H$ has the local-interval property, then the integrality gap of~(\ref{lp:simultan:degLp}) is at most $\frac{m}{m-2}$ and an approximate solution can be found in strongly polynomial time with the same guarantee.
\end{corollary}
\begin{proof}
  It suffices to show that $G$ is $\frac{m}{m-2}$-sparse, because then Theorem~\ref{thm:simultan:sparseGraphGapAndApx} can be applied.
  Let $e,f$ and $n$ denote the number of edges, the number of faces and the number of nodes, respectively.
  By Euler's formula, $f-e+n=2$ holds, and $mf\leq 2e$ follows since the length of the shortest cycle is $m$.
  But then $2m=mf-me+mn\leq 2e-me+mn$ and hence $(m-2)i(X)\leq m|X|$ for every $\emptyset\neq X\subseteq V$.
  This means that $G$ is $\frac{m}{m-2}$-sparse, and hence the proof is complete by Theorem~\ref{thm:simultan:sparseGraphGapAndApx}.
  \qed
\end{proof}

\begin{corollary}
  If $G$ is a Laman graph, $\mc L=\emptyset$ and $\mc H$ has the local-interval property, then the integrality gap of~(\ref{lp:simultan:degLp}) is at most $2$ and an approximate solution can be found in strongly polynomial time with the same guarantee.
\end{corollary}

\begin{theorem}\label{thm:simultan:evenCycleIntegralPolytope}
  If $G$ is an even cycle and $\mc L = \emptyset$, then~(\ref{lp:simultan:degLp})~is integer and the simultaneous assignment problem is solvable in strongly polynomial time.
\end{theorem}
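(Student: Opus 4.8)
The plan is to recognize the even cycle as a special case of the bipartite, locally laminar setting and then simply invoke Theorem~\ref{thm:bpLocLam}. The whole statement should fall out as a corollary, with the only real content being the verification of its hypotheses.

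First I would observe that an even cycle is bipartite: alternately two-color its nodes to obtain a bipartition $V=S\cup T$ with $G=(S,T;E)$. Since $\mc L=\emptyset$, the requirement of Theorem~\ref{thm:bpLocLam} that every set in $\mc L$ be contained in $S$ or in $T$ holds vacuously. The key structural observation comes next: in a cycle every node has degree exactly two, so $\Delta_G(v)$ has exactly two elements for each $v\in V$. Consequently $\mc F_v=\{\Delta_H(v):H\in\mc H\}$ is a family of subsets of a two-element set, and any such family is automatically laminar, since the only two distinct nonempty proper subsets are the two singletons (which are disjoint) while every remaining pair is nested. Hence $\mc H$ is locally laminar for \emph{any} choice of $\mc H$ on an even cycle, with no extra assumption needed.

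With bipartiteness, local laminarity, and the degenerate $\mc L$ all in hand, Theorem~\ref{thm:bpLocLam} applies directly and shows that~(\ref{lp:degLp}) describes the simultaneous assignment polytope, i.e. that~(\ref{lp:degLp}) is integral. For the algorithmic half, I would invoke the remark following Theorem~\ref{thm:bpLocLam}: under precisely these hypotheses the matrix of~(\ref{lp:degLp}) is a network matrix, so the problem reduces to a weighted circulation problem and can be solved in strongly polynomial time.

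The point requiring care---and where evenness is essential---is the bipartiteness step: an odd cycle is not bipartite, and on it~(\ref{lp:degLp}) fails to be integral (taking $\mc H=\{G\}$ with $b_G\equiv1$, the all-$\sfrac12$ vector is fractional while the LP value exceeds the integral optimum), so projected blossom inequalities would be needed. Beyond this, no genuine obstacle arises; the substance is just the degree-two argument that forces $\mc H$ to be locally laminar, reducing the statement to an already-established theorem. As a self-contained alternative I would note a direct total-unimodularity argument: the pair-type degree constraints $x_e+x_f\le b_H(v)$ are exactly the edge-incidence rows of the cycle $C'$ whose nodes are the edges of $G$ and whose edges join consecutive edges of $G$ sharing a node; this $C'$ is again an even cycle, hence bipartite, so those rows form a totally unimodular matrix, and appending the unit rows coming from single-edge degree bounds, capacities, and nonnegativity preserves total unimodularity, so the integral right-hand side yields an integral polytope.
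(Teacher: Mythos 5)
Your proposal is correct, but it takes a genuinely different route from the paper. The paper's own proof is a direct reduction: at each node $v$ of the cycle it discards all but the most restrictive degree bound among the subgraphs containing both edges at $v$ (folding single-edge bounds into capacities, and using bound $\infty$ when no subgraph contains both incident edges), which turns the instance into an ordinary capacitated $b$-matching problem on an even---hence bipartite---cycle; integrality and strongly polynomial solvability then follow from classical bipartite $b$-matching theory. You instead observe that $|\Delta_G(v)|=2$ forces $\mc F_v$ to be a family of subsets of a two-element set, which is automatically laminar, so $\mc H$ is locally laminar for free; then the statement becomes a literal corollary of Theorem~\ref{thm:bpLocLam} and of the network-matrix remark following it. Both arguments are sound, and your key observation (degree two $\Rightarrow$ local laminarity, for \emph{any} $\mc H$) is exactly what makes the hypotheses of Theorem~\ref{thm:bpLocLam} verifiable. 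What your route buys: it needs no new reduction, and it actually proves a slightly stronger statement, since Theorem~\ref{thm:bpLocLam} and its remark tolerate a nonempty laminar $\mc L\subseteq 2^S\cup 2^T$ and apply verbatim to any bipartite graph of maximum degree two (disjoint unions of paths and even cycles), settings the paper's $b$-matching reduction does not directly address. What the paper's route buys: it is elementary and self-contained, relying only on standard bipartite $b$-matching polyhedral results rather than the network-matrix construction. Your alternative total-unimodularity argument (pair constraints are the transposed incidence rows of the line graph $C'$ of $G$, which is again an even cycle, and unit rows preserve TU) is also valid and gives a third, fully self-contained derivation.
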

\begin{proof}
  For each node of the even cycle, one can omit all but the most restrictive bound posed in a subgraph in $\mc H$.
  This means that the problem is a capacitated $b$-matching problem on an even cycle --- with infinite bound at node $v$ if the two edges incident to $v$ do not appear in the same subgraphs in $\mc H$.
  It is well known that the capacitated $b$-matching polyhedron is described by the natural constraints for bipartite graphs, therefore,~(\ref{lp:simultan:degLp})~is integer and the problem can be solved in strongly polynomial time.
  \qed
\end{proof}

Figure~\ref{fig:simultan:pseodoTreeGapEg1} shows that Theorem~\ref{thm:simultan:evenCycleIntegralPolytope} does not hold even if we add only a leaf to a 4-cycle.

\begin{theorem}
  Assume that $\mc H$ has the local-interval property and $\mc L=\emptyset$.
  If $G$ is a cactus graph with minimum cycle length $m$, then the integrality gap of~(\ref{lp:simultan:degLp}) is at most $\frac{m}{m-1}$.
\end{theorem}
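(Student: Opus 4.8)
The plan is to reduce to the sparse-graph machinery already developed, namely Theorem~\ref{thm:sparseGraphGapAndApx}, by proving that a cactus with minimum cycle length $m$ is $\sfrac{m}{m-1}$-sparse, i.e.\ that $(m-1)i_G(X)\le m(|X|-1)$ holds for every $\emptyset\neq X\subseteq V$. Once this sparsity inequality is in hand, Theorem~\ref{thm:sparseGraphGapAndApx} applies directly, with the minimum cycle length $m$ playing the role of the numerator and $m-1$ the role of the denominator $l$, and immediately yields that the integrality gap of~(\ref{lp:degLp}) is at most $\sfrac{m}{m-1}$ (and, as a bonus, that an approximate solution of the same guarantee can be found in strongly polynomial time). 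So the entire task is to verify the sparsity inequality.

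To prove it, I would first reduce to the connected case. The induced subgraph $G[X]$ is again a disjoint union of cacti: any cycle of $G[X]$ is a cycle of $G$, so if an edge of $G[X]$ lay on two of its cycles it would lie on two cycles of $G$, contradicting the cactus property; moreover every cycle of $G[X]$ is a cycle of $G$ and hence has length at least $m$. Writing $G[X]$ as the node-disjoint union of connected cacti $H_1,\dots,H_c$ with $n_j$ nodes and $e_j$ edges, it then suffices to establish $(m-1)e_j\le m(n_j-1)$ for each component and sum, using $\sum_{j}(n_j-1)=|X|-c\le |X|-1$ because $c\ge 1$.

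The core of the argument is the per-component bound. For a connected cactus $H$ on $n$ nodes with $e$ edges, the circuit rank equals $\beta=e-n+1$, and by the cactus property the $\beta$ cycles of $H$ are pairwise edge-disjoint. Since each cycle has at least $m$ edges, the edges lying on cycles number at least $m\beta$, so $e\ge m\beta$ and therefore $\beta\le e/m$. Substituting into $e=n-1+\beta$ gives $e\le n-1+e/m$, equivalently $(m-1)e\le m(n-1)$, which is exactly the desired inequality; note it is tight already for a single $m$-cycle.

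I expect no serious obstacle here: the counting argument via edge-disjointness of cycles and the circuit-rank identity is routine, and the only points needing care are the structural observation that induced subgraphs of cacti remain cacti with minimum cycle length at least $m$, and the bookkeeping ensuring the $-1$ on the right-hand side survives the passage to disconnected $G[X]$ (which it does since $c\ge1$). All steps are strongly polynomial, as the result is obtained purely by invoking Theorem~\ref{thm:sparseGraphGapAndApx}.
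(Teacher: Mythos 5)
Your proof is correct, but it reaches the forest cover by a genuinely different route than the paper. The paper's own proof is a direct construction: since the cycles of a cactus are pairwise edge-disjoint and each has length at least $m$, one can pick $m$ distinct edges on every cycle and form $F_i$ by deleting the $i$-th chosen edge of each cycle; this immediately yields $m$ forests covering every edge at least $m-1$ times, and Theorem~\ref{thm:treeInterval} together with Theorem~\ref{thm:apxMain} finishes the argument. You instead verify the Nash-Williams-type sparsity condition $(m-1)\,i_G(X)\leq m(|X|-1)$ via the circuit-rank count $\beta=e-n+1$ and edge-disjointness of cycles, and then invoke Theorem~\ref{thm:sparseGraphGapAndApx}, which itself routes through Corollary~\ref{cor:multiCoverWithForests} (i.e.\ Theorem~\ref{thm:NashWilliams}) to produce the same kind of $(m,m-1)$-cover by forests. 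Your counting argument is sound --- induced subgraphs of cacti are cacti whose cycles are cycles of $G$, the number of simple cycles of a connected cactus equals its circuit rank, and the component bookkeeping with $c\geq 1$ is handled correctly --- and it has the merit of exhibiting cacti of girth $m$ as exactly $\sfrac{m}{m-1}$-sparse (tight for a single $m$-cycle), so the theorem becomes a pure corollary of the sparse-graph result with no ad hoc construction. What it costs is constructiveness and simplicity: the existence of the forests now rests on Nash-Williams, so turning the bound into the strongly polynomial algorithm requires computing a forest decomposition (matroid union), whereas the paper's one-line construction produces the forests explicitly and trivially.
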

\begin{proof}
  It is easy to see that there exist $m$ forests $F_1,\dots,F_{m}$ which cover every edge at least $(m-1)$-times.
  \qed
\end{proof}

\begin{theorem}\label{thm:simultan:pseudoForestGap}
  Assume that $\mc H$ has the local-interval property, $\mc L=\emptyset$ and that $G$ is a pseudo-tree with cycle length $m$.
  Then the integrality gap of~(\ref{lp:simultan:degLp}) is at most
  \[\begin{cases}
    \frac{m+1}{m} & \mbox{if $m$ is even},\\
    \frac{m}{m-1} &\mbox{if $m$ is odd,}
  \end{cases}\]
  and an approximate solution with the same guarantee can be found in strongly polynomial time.
\end{theorem}
\begin{proof}
  Let $C\subseteq E$ be the cycle and let $F_1,\dots,F_m$ denote the trees obtained by leaving out the edges of $C$ one at a time.
  Clearly, every edge $e\in E$ is contained in at least $(m-1)$ of $F_1,\dots,F_m$ and the polyhedron defined by~(\ref{lp:simultan:degLp}) is integer when the problem is restricted to $F_i$, hence the integrality gap of~(\ref{lp:simultan:degLp}) is at most $\frac{m}{m-1}$ by Theorem~\ref{thm:simultan:apxGapBound}.
  For even $m$, we improve this bound as follows.

  Observe that every edge $e\in E$ is contained in at least $m$ of $C,F_1,\dots,F_m$.
  Since~(\ref{lp:simultan:degLp}) defines an integer polyhedron when the problem is restricted to $C$, one gets that the integrality gap of~(\ref{lp:simultan:degLp}) is at most $\frac{m+1}{m}$ by Theorem~\ref{thm:simultan:apxGapBound}, which completes the proof.
  \qed
\end{proof}

Figures~\ref{fig:simultan:pseodoTreeGapEg1}~and~\ref{fig:simultan:pseodoTreeGapEg2} show tight examples with even and odd cycle length for Theorem~\ref{thm:simultan:pseudoForestGap}, respectively.

\begin{figure}[h!]
  \centering
  \begin{subfigure}[t]{.49\textwidth}
    \centering
    \begin{tikzpicture}[scale=.9]
      \centering
      \SetVertexMath
      dot/.style = {circle, draw, semithick,
        inner sep=0pt, minimum size=3pt,
        node contents={}},
      \tikzset{edge/.style = {->,> = latex'}}
      \begin{scope}[rotate=45]
        \Vertex[x=-1, y=1,L={v_2}]{v_2}
        \Vertex[x=1, y=1,L={v_1}]{v_1}
        \Vertex[x=-1, y=-1,L={v_3}]{v_3}
        \Vertex[x=1, y=-1,L={v_4}]{v_4}
        \Vertex[x=2.4142, y=-2.4142,L={v_5}]{v_5}

        \draw[] (v_2) -- (v_1) node [midway, above=-2pt, opacity=100] {$ $};
        \draw[] (v_4) -- (v_1) node [midway, right=-2pt, opacity=100] {$ $};
        \draw[] (v_4) -- (v_3) node [midway, below=-2pt, opacity=100] {$ $};
        \draw[] (v_2) -- (v_3) node [midway, left=-2pt, opacity=100] {$ $};
        \draw[] (v_4) -- (v_5) node [midway, left=-2pt, opacity=100] {$ $};

        \begin{pgfonlayer}{background}
          \begin{scope}[opacity=.8,transparency group]
            \highlight{11mm}{black!30}{(v_3.center) to [bend left =10] (v_2.center) to [bend left =10] (v_1.center) to [bend left =10] (v_4.center) to [bend right = 0] (v_5.center)}
          \end{scope}
          \begin{scope}[opacity=.8,transparency group]
            \highlight{8mm}{black!45}{(v_1.center) to [bend right =10] (v_2.center) to[bend right =10] (v_3.center) to[bend right =10] (v_4.center) to [bend left = 0] (v_5.center)}
          \end{scope}
          \begin{pgfonlayer}{background}
          \end{pgfonlayer}
        \end{pgfonlayer}
        \node at (0,-1.84){$H_2$};
        \node at (+1.95,0){$H_1$};
      \end{scope}
    \end{tikzpicture}
    \caption{}\label{fig:simultan:pseodoTreeGapEg1}
  \end{subfigure}
  \hfill
  \begin{subfigure}[t]{.49\textwidth}
    \centering
    \begin{tikzpicture}[scale=.9]
      \centering
      \SetVertexMath
      dot/.style = {circle, draw, semithick,
        inner sep=0pt, minimum size=3pt,
        node contents={}},
      \tikzset{edge/.style = {->,> = latex'}}
      \begin{scope}[rotate=-45]
        \Vertex[x=-1, y=1,L={v_1}]{v_1}
        \begin{scope}[rotate=72]
          \Vertex[x=-1, y=1,L={v_2}]{v_2}
        \end{scope}
        \begin{scope}[rotate=2*72]
          \Vertex[x=-1, y=1,L={v_3}]{v_3}
        \end{scope}
        \begin{scope}[rotate=3*72]
          \Vertex[x=-1, y=1,L={v_4}]{v_4}
        \end{scope}
        \begin{scope}[rotate=4*72]
          \Vertex[x=-1, y=1,L={v_5}]{v_5}
        \end{scope}

        \draw[] (v_2) -- (v_1) node [midway, above=-2pt, opacity=100] {$ $};
        \draw[] (v_5) -- (v_1) node [midway, right=-2pt, opacity=100] {$ $};
        \draw[] (v_4) -- (v_3) node [midway, below=-2pt, opacity=100] {$ $};
        \draw[] (v_2) -- (v_3) node [midway, left=-2pt, opacity=100] {$ $};
        \draw[] (v_4) -- (v_5) node [midway, left=-2pt, opacity=100] {$ $};

        \begin{pgfonlayer}{background}
          \begin{scope}[opacity=.8,transparency group]<
            \highlight{11mm}{black!30}{(v_3.center) to [bend left =10] (v_2.center) to [bend left =10] (v_1.center) to [bend left =10] (v_5.center) to [bend right = 0] (v_4.center)}

          \end{scope}
          \begin{scope}[opacity=.8,transparency group]
            \highlight{8mm}{black!45}{(v_1.center) to [bend right =10] (v_2.center) to[bend right =10] (v_3.center) to[bend right =10] (v_4.center) to [bend left = 0] (v_5.center)}
          \end{scope}
          \begin{pgfonlayer}{background}
          \end{pgfonlayer}
        \end{pgfonlayer}
        \begin{scope}[rotate=2*72+31+5]
          \node at (-1.4,1.4){$H_2$};
        \end{scope}

        \begin{scope}[rotate=-31-5]
          \node at (-1.5,1.5){$H_1$};
        \end{scope}
      \end{scope}
    \end{tikzpicture}
    \caption{}\label{fig:simultan:pseodoTreeGapEg2}
  \end{subfigure}
  \caption{Let $H_1$ and $H_2$ be the light and dark subgraphs, respectively.
    For both instances, $x\equiv\frac{1}{2}$ is a feasible fractional solution, while the optimal integer solution consists of two edges.
    Hence, the integrality gap is $\frac{5}{4}$ in both cases.}\label{fig:simultan:pseodoTreeGapEg}
\end{figure}
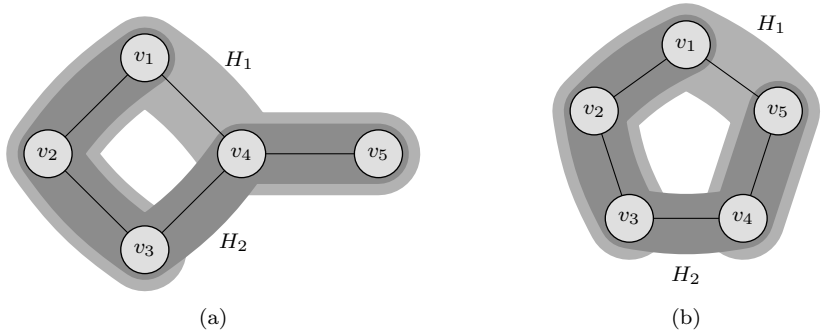

\section{Open Questions}\label{sec:simultan:open}
The bounded-violation algorithm for the upper bounded degree $g$-polymatroid element problem, mentioned in the introduction, can be improved in the special case $|\mc H|=1$ and $\mc L=\emptyset$, since then we get back the $b$-matching problem, which can be solved efficiently without any violation of the degree-constraints --- instead of degree-violation $2|\mc H|-1=1$ provided by the algorithm.
It is not clear if the degree-violation $(2|\mc H|-1)$ can be improved when $|\mc H|\geq 2$.
It is also open whether the approach extends to the case when $\mc L\neq \emptyset$.

It would be interesting to identify further polynomial-time solvable special cases.
Maybe this would also enable the application of the framework described in Section~\ref{sec:simultan:apx:constAPX} to derive new approximation algorithms.

By Theorem~\ref{cor:simultan:tree2Subgr}, the problem is solvable in polynomial time when the graph is a tree, $|\mc H|=2$ and $\mc L=\emptyset$.
A natural next step would be to investigate the case when $|\mc H|=3$ and $\mc L=\emptyset$.

By Theorem~\ref{thm:simultan:lamApxMain}, we have an upper bound on the integrality gap of~(\ref{lp:simultan:degLpStrong}), which is shown to be tight when $|\mc H|=2$.
It remains open whether this upper bound is tight for larger $\mc H$ or the upper bound can be further improved.

We leave it open if Theorem~\ref{thm:simultan:sapAPXhard2} holds for the unweighted case, that is, whether the case $|\mc H|=2$, $\mc L = \emptyset$ is hard to approximate in the unweighted case for bipartite graphs.
If it remains hard, then the proof of Theorem~\ref{thm:simultan:wdmapxc} may also imply that the unweighted distance matching problem cannot be arbitrarily approximated either.

In the \emph{degree-prescribed subgraph problem}, we are given a graph $G=(V,E)$ and degree-prescription $D_v\subseteq\{1,\dots,\deg(v)\}$ for each $v\in V$.
The question is whether there exists a subgraph $M$ of $G$ such that the $M$-degree of each node $v\in V$ is in $D_v$.
This problem can be solved in polynomial time for a wide range of inputs --- for example, when no two consecutive integers are left out from any $D_v$~\cite{lovasz1972factorization,generalFactorsOfGraphs}
or the size of $D_v$ is at least $\left\lceil\frac{\deg(v)}{2}\right\rceil$ for all $v\in V$~\cite{FrankDegreeConstrained}.
This problem being a generalization of the $b$-matching problem, it is quite natural to ask if the laminar simultaneous assignment problem can be generalized along these lines such that the problem remains tractable.
For example, instead of the degree-sum constraint~(\ref{eq:simultan:degLp:card}) for $\mc L$, we could consider a prescription $D_L$ for all $L\in\mc L$.
The case of parity prescription, when $D_v$ consists of either all even or all odd numbers, is already an interesting question.

\section{Acknowledgment}
The author is grateful to Kristóf Bérczi for his support and for suggesting relevant literature. The author also thanks Alpár Jüttner his remarks on the presentation of the results.\\
The work was supported by the Lend\"ulet Programme of the Hungarian Academy of Sciences -- grant number LP2021-1/2021.\\
Supported by the \'UNKP-22-4 New National Excellence Program of the Ministry for Culture and Innovation from the source of the National Research, Development and Innovation Fund.

\bibliographystyle{spmpsci}      
\bibliography{bibliography}   

\end{document}